\date{\today}
\author[a]{Anna Neufeld}
\author[b]{Ameer Dharamshi}
\author[c]{Lucy L. Gao}
\author[b,d]{Daniela Witten}
\affil[a]{Public Health Sciences Division, Fred Hutchinson Cancer Center}
\affil[b]{Department of Biostatistics, University of Washington}
\affil[c]{Department of Statistics, University of British Columbia}
\affil[d]{Department of Statistics, University of Washington}
\DeclareMathOperator{\E}{\text{E}}
\DeclareMathOperator{\Var}{\text{Var}}
\DeclareMathOperator{\Cov}{\text{Cov}}
\newcommand{\N}{\mathrm{N}}
\newcommand{\eps}{\epsilon}
\newcommand{\xo}{X^{(1)}}
\newcommand{\xtr}{X^{(\mathrm{train})}}
\newcommand{\xte}{X^{(\mathrm{test})}}
\newcommand{\xt}{X^{(2)}}
\newcommand{\xm}{X^{(m)}}
\newcommand{\xM}{X^{(M)}}
\DeclareSymbolFont{yhlargesymbols}{OMX}{yhex}{m}{n}\DeclareMathAccent{\yhwidehat}{\mathord}{yhlargesymbols}{"62}
\newtheorem{proposition}{Proposition}
\newtheorem{definition}{Definition}
\newtheorem{example}{Example}[section]
\newtheorem{corollary}{Corollary}
\newtheorem{theorem}{Theorem}
\newtheorem{remark}{Remark}
\newtheorem{lemma}{Lemma}
\title{Data thinning for convolution-closed distributions}
\DeclareSymbolFont{yhlargesymbols}{OMX}{yhex}{m}{n}\DeclareMathAccent{\yhwidehat}{\mathord}{yhlargesymbols}{"62}
\begin{document}
\LinesNumbered
\SetAlgoSkip{bigskip}
\SetKwInOut{Input}{Input}
\SetKwInOut{Output}{Output}

\title{Data thinning for convolution-closed distributions}

\maketitle

\begin{abstract}%
We propose data thinning, an approach for splitting an observation into two or more independent parts that sum to the 
original observation, and that follow the same distribution as the original observation, up to a (known) scaling of a parameter. This very general proposal is applicable to any  convolution-closed distribution, a  class that includes the Gaussian, Poisson, negative binomial, gamma, and binomial distributions, among others. Data thinning has a number of applications to model selection, evaluation, and inference. For instance, cross-validation via data thinning provides an attractive alternative to the usual approach of cross-validation via sample splitting, especially in settings in which the latter is not applicable. In simulations and in an application to single-cell RNA-sequencing data, we show that data thinning can be used to validate the results of unsupervised learning approaches, such as k-means clustering and principal components analysis, for which traditional sample splitting is unattractive or unavailable. 
\end{abstract}


\section{Introduction}
\label{sec1:intro}

As scientists fit increasingly complex models to their data, there is an ever-growing need for out-of-box methods that can be used to validate these models. In many settings, the most natural option is sample splitting, in which the $n$ observations in a dataset are split into a training set, used to fit a model, and a test set, used to validate it \citep{hastie2009elements}. Sample splitting can also be applied to conduct inference after model selection \citep{rinaldo2019bootstrapping}. Sample splitting is flexible and intuitive, and is a vital tool for any practicing data analyst. 

However, in settings where there is one parameter of interest per observation, or the parameter of interest is a function of the $n$ observations, sample splitting cannot be applied. 
For example, when estimating a low-rank approximation to a matrix, there is one parameter of interest (a latent variable coordinate) for each of the $n$ rows in the matrix. 
Similarly, in fixed-covariate regression under model misspecification, the target parameter depends on the specific $n$ observations included in the dataset \citep{buja2019models}. Finally, there may be settings in which we wish to draw observation-specific inferences about each of our $n$ observations; sample splitting does not allow this.

In this paper, we consider an alternative to sample splitting that splits a single observation $X$ into independent parts that follow the same distribution as $X$. Crucially, the fact that we split a \emph{single} observation avoids the pitfalls of sample splitting in the situations mentioned above: we will split every observation in the data to obtain a training set and a test set that each involve all $n$ observations.

The concept of splitting a single observation as an alternative to sample splitting has been explored in several recent papers \citep{rasines2021splitting, leiner2021data, oliveira2021unbiased, oliveira2022coupled, neufeld2022inference}. We can split $X \sim \N(\mu,\sigma^2)$ with known $\sigma^2$ into two independent Gaussian random variables \citep{rasines2021splitting,leiner2021data,oliveira2021unbiased}, and 
 $X \sim \text{Poisson}(\lambda)$ into two independent Poisson random variables \citep{neufeld2022inference,leiner2021data, oliveira2022coupled}. However, outside of these two distributions, no proposals are available to split a random variable into independent parts that follow the same distribution as the original random variable. \cite{leiner2021data} propose data fission, a general-purpose approach to decompose  $X$ into two parts, $\xo$ and $\xt$, such that (i) $\xo$ and $\xt$ can together be used to reconstruct $X$, and (ii) the joint distribution of $\left(\xo, \xt\right)$ is tractable. However, outside of the special cases of the Gaussian and Poisson distributions, the proposal of \cite{leiner2021data} leads to $\xo$ and $\xt$ that are not independent, and that may not follow the same distribution as $X$. Consequently, unless the data are Gaussian- or Poisson-distributed, data fission does not serve as a direct alternative to sample splitting, as many procedures that would be trivial under sample splitting become very complicated. We elaborate on these points in Section~\ref{appendix:leiner}.

 In this paper, we propose data thinning, a recipe for decomposing a single observation $X$ into two  parts, $\xo$ and $\xt$, such that (i) $X=\xo+\xt$, (ii) $\xo$ and $\xt$ are independent, and (iii) $\xo$ and $\xt$ follow the same distribution as $X$, up to a (known) scaling of a parameter.
 Critically, properties (ii) and (iii) guarantee that  this decomposition is straightforward to use in applied settings. For instance, to evaluate the suitability of a model
 for $X$, we can fit it to $\xo$ (since it follows the same distribution as $X$), and can validate it using $\xt$ (since it also follows the same distribution, and furthermore is independent of $\xo$).  Our recipe can be applied to any distribution that is convolution-closed \citep{joe1996time}: this includes the multivariate Gaussian, Poisson, negative binomial, 
 gamma,  binomial, and multinomial distributions, among others. Thus, our work drastically expands the set of distributions that can be split into independent parts,  and provides a unified lens through which to view seemingly unrelated approaches. 
 Furthermore, data thinning can be used to decompose $X$ into more than two independent random variables. 
 
We illustrate our proposal with the following example, which shows that a gamma random variable can be thinned into $M$ independent gamma random variables. 

\begin{example}[Gamma decomposition into $M$ components, data thinning] \label{ex:gcs-gam-Kfold} Suppose that $X \sim \mathrm{Gamma}(\alpha,\beta)$, where $\beta$ is unknown. 
We take $(X^{(1)},\ldots,X^{(M)}) = XZ$, where  $Z \sim \mathrm{Dirichlet}(\alpha/M,\ldots,\alpha/M)$. Then $X^{(1)},\ldots,X^{(M)}$ are mutually independent, they sum to $X$, and each is marginally drawn from a  $\mathrm{Gamma}(\alpha/M, \beta)$ distribution. 
\end{example}
In other words, data thinning allows us to decompose a $\mathrm{Gamma}(\alpha,\beta)$ random variable, for which $\beta$ is unknown, into $M$ independent gamma random variables, $\xo,\ldots,\xM$. Therefore, fitting a model to $X-\xm$ and validating it using $\xm$ is straightforward. 

The rest of this paper is organized as follows. In Section~\ref{sec:proposal}, we briefly review the class of convolution-closed distributions, and introduce a thinning procedure to split a single random variable into two independent random variables, each of which follows the same distribution as the original random variable (up to a scaling of the parameter(s)). We extend this thinning procedure to split a single random variable into an arbitrary number of independent random variables in Section~\ref{sec:multiplefolds}. We elaborate on the comparison between sample splitting and data thinning in Section~\ref{sec_roleEps}, and in Section~\ref{appendix:leiner} we elaborate on the comparison between data fission \citep{leiner2021data} and data thinning. In Section~\ref{sec:sim}, we focus on validating the results of clustering and low-rank matrix approximations. These are two settings in which the usual cross-validation via sample splitting approach cannot be directly applied \citep[see, e.g.][]{owen2009bi, fu2020estimating}, but data thinning provides a simple alternative. An application to single-cell RNA-sequencing data is in Section~\ref{sec:data}. We close with a discussion in Section~\ref{sec:discussion}. All proofs are in the appendix. 

\section{The data thinning proposal}
\label{sec:proposal}

\subsection{A review of convolution-closed distributions}
\label{subsec:convclosed}

We begin by defining a convolution-closed distribution \citep{joe1996time,jorgensen1998stationary}.

\begin{definition}[Convolution-closed]
	Let $F_\lambda$ denote a distribution indexed by a parameter $\lambda$ in parameter space $\Lambda$. Let ${X}' \sim F_{\lambda_1}$ and ${X}'' \sim F_{\lambda_2}$ with ${X}' \perp\!\!\!\perp {X}''$. If ${X}'+{X}'' \sim F_{\lambda_1+\lambda_2}$ whenever $\lambda_1+\lambda_2 \in \Lambda$, then $F_\lambda$ is \emph{convolution-closed} in the parameter $\lambda$. 
\end{definition} 

Many well-known distributions are convolution-closed. While the $\mathrm{Poisson}(\lambda)$ distribution is convolution-closed in its single parameter $\lambda$ and the $\mathrm{N}(\mu, \sigma^2)$ distribution is convolution-closed in the two-dimensional parameter $(\mu, \sigma^2)$, other distributions, such as the gamma, are convolution-closed in just one parameter with the other parameter(s) held fixed. Table~\ref{tab:convclosed} provides details about some well-known convolution-closed distributions. The following definition provides a useful property of most convolution-closed distributions. 

\begin{definition}[Linear expectation property]
\label{def:linExp}
Let $F_{\lambda}$ denote a distribution indexed by $\lambda \in \Lambda$.  We say that it satisfies the linear expectation property if, for $X \sim F_\lambda$, $E\left[ X \right]$ is a linear function of $\lambda$. 
\end{definition}

\begin{remark}[Most convolution-closed distributions satisfy the linear expectation property]
Let $F_{\lambda}$ be a convolution-closed distribution whose first moment exists for all $\lambda \in \Lambda$. By definition, if $X' \sim F_{\lambda_1}$ and $X'' \sim F_{\lambda_2}$ and $\lambda_1+\lambda_2 \in \Lambda$, then $X' + X'' \sim F_{\lambda_1+\lambda_2}$. By properties of the expected value, $\E[X'+X''] = \E[X'] + \E[X'']$. Thus, the expectation is additive in $\lambda$, and so, outside of contrived counterexamples, $F_\lambda$ satisfies the linear expectation property. The linear expectation property is satisfied for all distributions in Table~\ref{tab:convclosed}. 
\end{remark}

\begin{remark}[Not all convolution-closed distributions satisfy the linear expectation property]
The $\mathrm{Cauchy}(\mu, \gamma)$ distribution is convolution-closed in the two-dimensional parameter $(\mu, \gamma)$, but does not satisfy the linear expectation property because $\E[X]$ does not exist. 
\end{remark}

\begin{table}
\caption{A partial list of convolution-closed distributions. The last two rows contain multivariate distributions. The results in each row are easily verifiable. The generalized Poisson and Tweedie distributions are written in their additive exponential dispersion family parameterization; see \cite{jorgensen1998stationary} for details.}
\centering
\scriptsize
\begin{tabular}{l l}
\centering{Distribution} & Notes \\
\hline
$X \sim \mathrm{Poisson}(\lambda)$, where $\E[X] = \lambda$ and $\Var(X)=\lambda$. & Convolution-closed in $\lambda$. \\
$X \sim \mathrm{N}(\mu, \sigma^2)$, where $\E[X]=\mu$ and $\Var[X]=\sigma^2$. & Convolution-closed in $(\mu, \sigma^2)$. \\
$X \sim \mathrm{NegativeBinomial}(r,p)$, where $\E[X] = r \frac{1-p}{p}$ and $\Var[X] = r \frac{1-p}{p^2}$. & Convolution-closed in $r$ if $p$ is fixed.  \\
$X \sim \mathrm{Gamma}(\alpha, \beta)$, where $\E[X] = \frac{\alpha}{\beta}$ and $\Var(X) = \frac{\alpha}{\beta^2}$. & Convolution-closed in $\alpha$ if $\beta$ is fixed. \\
$X \sim \mathrm{Binomial}(r,p)$, where $\E[X]=rp$ and $\Var(X) = rp(1-p)$. & Convolution-closed in $r$ if $p$ is fixed. \\
$X \sim \mathrm{InverseGaussian}(\mu w, \lambda w^2)$  with $\E[X]=\mu w$ and $\Var(X) = \frac{w^3 \mu^3}{w^2 \lambda} = \frac{w \mu^3}{\lambda}$. & Convolution-closed in $w$ if $\mu$ and $\lambda$ are fixed.  \\
$X \sim \mathrm{GeneralizedPoisson}(\lambda, \theta)$, see \cite{jorgensen1998stationary} for parameterization.&  Convolution-closed in $\lambda$ if $\theta$ is fixed.  \\
$X \sim \mathrm{Tweedie}_p(\lambda,\theta), $ see \cite{jorgensen1998stationary} for parameterization. & Convolution-closed in $\lambda$ if $\theta$ and $p$ are fixed. \\
${X} \sim \mathrm{N}_k \left( {\mu}, {\Sigma}\right)$, with $\E[{X}] = {\mu}$ and $\Var({X}) = {\Sigma}$. & Convolution-closed in $({\mu}, {\Sigma})$. \\
${X} \sim \mathrm{Multinomial}_k \left( r,  {p} \right)$, with $\E[{X}] = r {p}$ and $\Var({X}) = r \left( \mathrm{diag}({p}) - {p}{p}^T \right)$. &  Convolution-closed in $r$ if ${p}$ is fixed.  \\
\end{tabular}
\label{tab:convclosed}	
\end{table}

For a convolution-closed distribution $F_\lambda$, suppose that $X' \sim F_{\lambda_1}$ and $X'' \sim F_{\lambda_2}$ with \\ ${X}' \perp\!\!\!\perp {X}''$. Let $G_{\lambda_1, \lambda_2, x}$ denote the conditional distribution of $X' \mid X'+X'' = x$. The density of the distribution $G_{\lambda_1,\lambda_2,x}$ can be written down for any $F_\lambda$ with a known density function \citep{jorgensen1992exponential}. Furthermore, it turns out that $G_{\lambda_1,\lambda_2,x}$ has a simple closed form for several of the well-known distributions from Table~\ref{tab:convclosed};  see Table~\ref{tab:maintable}. 
 For example, if $F_\lambda$ is the $\mathrm{Poisson}(\lambda)$ distribution, then $G_{\lambda_1, \lambda_2, x}$ is the $\mathrm{Binomial}\left(x, \lambda_1/(\lambda_1+\lambda_2) \right)$ distribution. 
 
\subsection{Data thinning}
\label{subsec:datathin}

Recall from Section~\ref{subsec:convclosed} that  $G_{\lambda_1, \lambda_2, x}$ is the conditional distribution of $X' \mid X'+X'' = x$, where ${X}' \sim F_{\lambda_1}$ and ${X}'' \sim F_{\lambda_2}$ with ${X}' \perp\!\!\!\perp {X}''$.  We now introduce our proposal.

\begin{algorithm}[H]
\caption{Data thinning}
\label{alg:datathin}
\Input{A realization $x$ of $X \sim F_\lambda$, where $F_\lambda$ is convolution-closed in $\lambda$ with parameter space $\Lambda$. A scalar $\epsilon \in (0,1)$ such that $\epsilon \lambda \in \Lambda$ and $(1-\epsilon)\lambda \in \Lambda$. }
Draw $\xo \mid {X}=x \sim G_{\epsilon \lambda, (1-\epsilon)\lambda, x}$. \\
Let $\xt = X - \xo$.	\\ 
\Output{$\left(\xo, \xt\right)$.}
\end{algorithm}

We now introduce our main theorem, which is motivated by a proposal by \cite{joe1996time} to construct autoregressive time series processes with known marginal distributions. 

\begin{theorem}
\label{theorem:datathin}
Suppose that we apply Algorithm~\ref{alg:datathin} to a realization $x$ of $X \sim F_\lambda$. Then, the following results hold:
(i) $\xo \sim F_{\epsilon \lambda}$ and $\xt \sim F_{(1-\epsilon) \lambda}$; (ii) $\xo \perp\!\!\!\perp \xt$; (iii) If $F_\lambda$ satisfies the linear expectation property (Definition~\ref{def:linExp}), then $\E[\xo] = \eps \E[X]$ and $\E[\xt] = (1-\eps) \E[X]$. 
\end{theorem}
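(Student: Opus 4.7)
The plan is to argue by a coupling/auxiliary-construction argument: build a pair of independent variables with the target marginal distributions, show that pair has the same joint law as $(\xo, \xt)$, and read off all three conclusions.

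Concretely, I would introduce auxiliary random variables $X' \sim F_{\epsilon \lambda}$ and $X'' \sim F_{(1-\epsilon)\lambda}$ with $X' \perp\!\!\!\perp X''$. By the convolution-closed assumption, $X' + X'' \sim F_\lambda$, which is the same marginal distribution as $X$. By the very definition of $G$ given in Section~\ref{subsec:convclosed}, the conditional distribution of $X' \mid X' + X'' = s$ is $G_{\epsilon \lambda, (1-\epsilon)\lambda, s}$. Comparing this with Algorithm~\ref{alg:datathin}, which draws $\xo \mid X = x \sim G_{\epsilon \lambda, (1-\epsilon)\lambda, x}$, the two joint distributions agree on both the marginal of the ``sum coordinate'' and the conditional of the ``first coordinate given the sum.'' Hence the pair $(\xo, X)$ has the same joint distribution as $(X', X' + X'')$.

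From here the results follow immediately. Applying the map $(a,b) \mapsto (a, b-a)$ to both pairs gives $(\xo, X - \xo) \stackrel{d}{=} (X', X'')$, i.e., $(\xo, \xt) \stackrel{d}{=} (X', X'')$. Since $X'$ and $X''$ are independent with $X' \sim F_{\epsilon \lambda}$ and $X'' \sim F_{(1-\epsilon)\lambda}$, we obtain (i) and (ii) at once. For (iii), invoke the linear expectation property: if $\E[X] = c \lambda$ (with $c$ possibly scalar- or vector-valued), then $\E[\xo] = c(\epsilon \lambda) = \epsilon \E[X]$, and likewise $\E[\xt] = (1-\epsilon)\E[X]$. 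As a sanity check, these two expectations sum to $\E[X]$, consistent with $\xo + \xt = X$.

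I do not anticipate a substantive obstacle; the argument reduces to recognizing that the conditional distribution $G$ used in Algorithm~\ref{alg:datathin} was defined precisely so that $(\xo, X)$ matches the joint law of $(X', X' + X'')$. The only point requiring a bit of care is ensuring that ``linear function of $\lambda$'' in Definition~\ref{def:linExp} is interpreted as a genuinely linear (not merely affine) map, so that $\epsilon$ factors out cleanly; this is justified by the additivity argument given in the remark following the definition.
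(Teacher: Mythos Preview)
Your proposal is correct and follows essentially the same argument as the paper: introduce independent $X' \sim F_{\epsilon\lambda}$ and $X'' \sim F_{(1-\epsilon)\lambda}$, use the convolution-closed property and the definition of $G$ to match the joint law of $(\xo, X)$ with that of $(X', X'+X'')$, and then transfer to $(\xo,\xt)\stackrel{d}{=}(X',X'')$ via the deterministic map $(a,b)\mapsto(a,b-a)$. Your handling of part (iii) and the remark about linearity (rather than affinity) of the expectation are slightly more explicit than the paper's, but the substance is identical.
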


Theorem~\ref{theorem:datathin} is proven in Appendix~\ref{appendix:mainproof}. The intuition for parts (i) and (ii) is as follows: if $X \sim F_\lambda$,  then $X$ could have arisen as the sum of two independent random variables $X' \sim F_{\lambda_1}$ and $X'' \sim F_{\lambda_2}$, with $\lambda_1+\lambda_2 = \lambda$. Algorithm~\ref{alg:datathin} works backwards to undo this sum by generating 
$\xo$ and $\xt$ that follow the same distribution as $X'$ and $X''$. Part (iii) follows from Definition~\ref{def:linExp}. As we will see in Section~\ref{subsec:eps}, $\eps \in (0,1)$ is a tuning parameter that governs a tradeoff between how much information is in $\xo$ as opposed to $\xt$.

Theorem~\ref{theorem:datathin} guarantees that the decomposition provided by Algorithm~\ref{alg:datathin} satisfies the goals given in Section~\ref{sec1:intro}: namely $X=\xo+\xt$,  $\xo \perp\!\!\!\perp \xt$, and $\xo$ and $\xt$ follow the same distribution as $X$, up to a (known) scaling of a parameter. Table~\ref{tab:maintable} summarizes the data thinning proposal for several well-known distributions. However, Algorithm~\ref{alg:datathin} and Theorem~\ref{theorem:datathin} apply well beyond the set of distributions considered in Table~\ref{tab:maintable}.

\begin{remark} 
\label{remark:sample}
 In Table~\ref{tab:maintable}, we focus on distributions where the conditional distribution $G_{\lambda_1,\lambda_2, x}$ has a recognizable form. For distributions where this is not the case, standard numerical sampling algorithms can be used to generate $\xo$ and $\xt$, so long as the conditional distribution can be expressed up to a normalizing constant.
 \end{remark}

\begin{remark} 
\label{remark:nuisance}
Some of the decompositions presented in Table~\ref{tab:maintable} require knowledge of an additional parameter that is not of primary interest. For example,  
thinning the $\mathrm{N}(\mu, \sigma^2)$ distribution requires knowledge of $\sigma^2$ \citep[see also:][]{rasines2021splitting, leiner2021data}. 
In Section~\ref{subsec:nuissance}, we explore the implications of performing data thinning in the presence of an unknown nuisance parameter. 
\end{remark}

\begin{remark}\label{remark:binomial}
 Table~\ref{tab:maintable} indicates that thinning the $\mathrm{Binomial}(r,p)$ distribution or the $\mathrm{Multinomial}(r,p)$ distribution requires that $\epsilon r$ take on an integer value. This is because these distributions are not infinitely divisible \citep{joe1996time}. This restriction becomes more limiting in the extension to multiple folds given in Section~\ref{sec:multiplefolds}, and prevents us from thinning the Bernoulli or categorical distributions. 
  \end{remark}

\begin{remark}
  \label{remark:normal_equivalence}
The thinning recipe for the Gaussian can be shown to be equivalent, up to a simple rescaling by $\epsilon$, to a procedure for splitting a Gaussian random variable with known variance that has been used in several recent papers \citep{rasines2021splitting, leiner2021data, oliveira2021unbiased, tian2018selective, tian2020prediction}. We derive this equivalence in Section~\ref{appendix:leiner}. 
  \end{remark}
  
 We now give an example of an application where data thinning is useful in practice. 
  
\begin{example}[Model validation using data thinning]
\label{ap:Eval}
Suppose we observe ${X}_{ij}$ for $i=1,\ldots,n$ and $j=1,\ldots,d$, where either each ${X}_{ij}$ is drawn independently from a univariate convolution-closed distribution that satisfies the linear expectation property from Definition~\ref{def:linExp}, or else each row $(X_{i1},\ldots, X_{id})^T$ is drawn independently from a multivariate convolution-closed distribution that satisfies the linear expectation property. 

We wish to evaluate $\hat{\mu}({X})$ as an estimator for $\E[{X}]$. Computing a loss function such as mean-squared error between $\hat{\mu}({X})$ and ${X}$ is unsatisfactory, since the loss function will take on a small value 
if we overfit the mean.
Instead, we apply Algorithm~\ref{alg:datathin} with $\epsilon \in (0,1)$ to 
either each element or each row in $X$, such that each element 
${X}_{ij}$ is thinned into ${X}_{ij}^{(1)}$ and ${X}_{ij}^{(2)}$. We compute $\hat{\mu}({X}^{(1)})$, which is an estimator of $\E[{X}^{(1)}]=\epsilon \E[{X}]$ (Theorem~\ref{theorem:datathin}, part (iii)). 
We then compute a loss function between $\hat{\mu}({X}^{(1)})$ and ${X}^{(2)}$. Since $\xo \perp\!\!\!\perp \xt$ (Theorem~\ref{theorem:datathin}, part (ii)), the loss function will not take on small values due to overfitting. 
\end{example}

In Example~\ref{ap:Eval}, if $\epsilon = 0.5$, then $\E[{X}^{(1)}] = \E[{X}^{(2)}] = 0.5 \E[{X}]$. Thus, $\hat{\mu}({X}^{(1)})$ is an estimator of $\E[{X}^{(2)}]$, and so devising a suitable loss function is straightforward. If $\epsilon \neq 0.5$, then $\hat{\mu}({X}^{(1)})$ is a plug-in estimator of $\frac{\epsilon}{1-\epsilon} \E[{X}^{(2)}]$ (Theorem~\ref{theorem:datathin}). The following example shows how a loss function that takes into account this factor of $\epsilon$ can be constructed in practice. 

\begin{example}[Example~\ref{ap:Eval} with mean squared error  loss]
\label{ex:lossMSE}
Suppose we wish to use mean squared error to define a loss function between $\hat{\mu}(\xo)$ and $\xt$ in Example~\ref{ap:Eval}.  Since $\E[{X}^{(2)}] = \frac{1-\epsilon}{\eps} \E[{X}^{(1)}]$, we compute the loss as
$$
\frac{1}{n d} \left\| {X}^{(2)} - \frac{1-\eps}{\eps} \hat{\mu}({X}^{(1)})\right\|_F^2,
$$
where the factor of $ (1-\epsilon)/\eps$  turns an estimate of $\E[{X}^{(1)}]$ into an estimate of $\E[{X}^{(2)}]$.
\end{example}

We discuss the choice of $\epsilon$ in Section~\ref{subsec:eps}. 

\begin{table}
\caption{Details of data thinning for several well-known distributions, using the parameterizations given in Table~\ref{tab:convclosed}. While the exponential distribution itself is not convolution-closed in its single parameter, recognizing it as a special case of the gamma distribution with known $\alpha=1$ yields a decomposition. In all cases, the distribution of $X^{(2)}$ matches that of $X^{(1)}$, with $\epsilon$ replaced by $(1-\epsilon)$, with $\xo \perp\!\!\!\perp \xt$.
}
\scriptsize
\begin{tabular}{l  l l  l}
Distribution of ${X}$ & Generate $\xo \mid {X}=x $ as: &  Dist. of $X^{(1)}$ & Notes \\
\hline
$\text{Poisson}(\lambda)$ & Draw $\xo \mid {X}=x \sim \text{Binomial}(x, \epsilon).$ & $\text{Poisson}(\epsilon \lambda)$ &  \\
$\text{N}(\mu, \sigma^2)$ & Draw $\xo \mid {X}=x \sim \text{N}(\epsilon x, \epsilon (1-\epsilon) \sigma^2).$ & $\text{N}(\epsilon \mu, \epsilon \sigma^2)$ &  $\sigma^2$ must be known. \\
$\mathrm{NegativeBinomial}(r, p)$ &   Draw $\xo \mid {X}=x \sim \text{BetaBinomial}(x, \epsilon r, (1-\epsilon) r). $ & $\text{NegativeBinomial}(\epsilon r, p)$ & $r$ must be known.  \\
$\text{Gamma}(\alpha, \beta)$ & Draw $Z \sim \mathrm{Beta}\left(\epsilon \alpha, (1-\eps) \alpha \right)$, and let $\xo = x \cdot Z$.  & $\text{Gamma}(\epsilon \alpha, \beta)$  & $\alpha$ must be known. \\
$\text{Exponential}(\lambda)$ & Draw $Z \sim \mathrm{Beta}(\epsilon, (1-\epsilon))$, and let $\xo = x \cdot Z$. & $\text{Gamma}(\epsilon, \lambda)$  &  \\ 
$\text{Binomial}(r,p)$ & Draw $\xo \mid {X}=x \sim \text{Hypergeometric}(\epsilon r, (1-\epsilon)r, x).$ & $\text{Binomial}(\epsilon r,p)$ & $r$ must be known \\
&&& $\epsilon r$ must be integer. \\
$\text{N}_k({\mu},{\Sigma})$ & Draw ${X}^{(1)} \mid {{X}}={x} \sim \text{N}(\epsilon {x}, \eps (1-\eps) {\Sigma}).$ & $\text{N}_k(\eps {\mu} , \eps {\Sigma})$ & $\Sigma$ must be known.  \\
$\text{Multinomial}_k(r, {p})$ & Draw ${X}^{(1)} \mid {{X}}={x} \sim$ & $\text{Multinomial}_k(\eps r, {p})$ & $r$ must be known.  \\
& $\text{MultivariateHypergeometric}({x}_1, {x}_2, \ldots, {x}_k, \eps r)$. & & $\epsilon r$ must be integer. \\
\\
\end{tabular}	
\label{tab:maintable}
\end{table}

\subsection{Effect of unknown nuisance parameters}
\label{subsec:nuissance}

For several of the distributions in Table~\ref{tab:maintable}, data thinning requires knowledge of a nuisance parameter. For example, thinning a $\mathrm{N}(\mu, \sigma^2)$ distribution requires knowledge of  $\sigma^2$. 

We now consider what happens when we perform data thinning on Gaussian data using an incorrect value of 
the variance. We refer to this incorrect value as $\tilde{\sigma}^2$.

\begin{proposition}
Suppose that we observe $x$ from $X \sim \mathrm{N}(\mu, \sigma^2)$.  We draw $\xo \mid X=x \sim \mathrm{N}\left(\epsilon x, \epsilon (1-\epsilon) \tilde{\sigma}^2 \right)$, for some $\tilde{\sigma}^2$ that is not a function of $x$, and let $\xt = X-\xo$. Then:
(i) $\xo \sim \mathrm{N}\left(\epsilon \mu,  \epsilon^2 \sigma^2 +  \epsilon (1-\epsilon) \tilde{\sigma}^2 \right)$, 
(ii) $\xt \sim \mathrm{N}\left((1-\epsilon) \mu, (1-\epsilon)^2 \sigma^2 +  \epsilon (1-\epsilon) \tilde{\sigma}^2\right)$,  and 
(iii) $\mathrm{cov}\left(\xo, \xt\right) = \epsilon (1-\epsilon) \left( \sigma^2 - \tilde{\sigma}^2 \right)$. 
\label{prop:normalnuissance}
\end{proposition}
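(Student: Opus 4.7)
The plan is to reduce everything to a joint normality argument by writing the randomization in Algorithm~\ref{alg:datathin} explicitly as an additive noise representation. Specifically, because $X^{(1)} \mid X = x \sim \mathrm{N}(\epsilon x, \epsilon(1-\epsilon)\tilde{\sigma}^2)$ and $\tilde{\sigma}^2$ does not depend on $x$, I can introduce an auxiliary random variable $W \sim \mathrm{N}(0, \epsilon(1-\epsilon)\tilde{\sigma}^2)$ independent of $X$ and write $X^{(1)} = \epsilon X + W$. Then $X^{(2)} = X - X^{(1)} = (1-\epsilon) X - W$, and the pair $(X^{(1)}, X^{(2)})$ is a linear transformation of the jointly normal pair $(X, W)$, hence jointly normal.

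Once that representation is in place, parts (i) and (ii) follow from linearity: the mean and variance of $X^{(1)} = \epsilon X + W$ come out to $\epsilon \mu$ and $\epsilon^2 \sigma^2 + \epsilon(1-\epsilon)\tilde{\sigma}^2$, and the mean and variance of $X^{(2)} = (1-\epsilon) X - W$ come out to $(1-\epsilon)\mu$ and $(1-\epsilon)^2 \sigma^2 + \epsilon(1-\epsilon)\tilde{\sigma}^2$. Because these are marginals of a jointly Gaussian vector, each is Gaussian with the stated parameters.

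For part (iii), I would compute
\[
\mathrm{cov}(X^{(1)}, X^{(2)}) = \mathrm{cov}\bigl(\epsilon X + W,\ (1-\epsilon) X - W\bigr) = \epsilon(1-\epsilon)\,\mathrm{Var}(X) - \mathrm{Var}(W),
\]
using the independence of $X$ and $W$ and the fact that all cross covariances between $X$ and $W$ vanish. Plugging in $\mathrm{Var}(X) = \sigma^2$ and $\mathrm{Var}(W) = \epsilon(1-\epsilon)\tilde{\sigma}^2$ gives $\epsilon(1-\epsilon)(\sigma^2 - \tilde{\sigma}^2)$, which also confirms the sanity check that if $\tilde{\sigma}^2 = \sigma^2$ we recover the independence guaranteed by Theorem~\ref{theorem:datathin}.

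There is no genuine obstacle here; the only step that requires a moment of care is the justification that we may represent the Markov kernel defining $X^{(1)} \mid X$ via the explicit additive-noise construction $X^{(1)} = \epsilon X + W$ with $W \perp\!\!\!\perp X$. This is immediate because the conditional law in the algorithm is a Gaussian location family with a variance that does not depend on $x$, so the induced joint distribution on $(X, X^{(1)})$ is the same as the one produced by the $(X, W)$ construction. Everything else is a two-line linear-combination-of-normals computation.
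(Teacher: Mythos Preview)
Your proof is correct and arguably cleaner than the paper's, but it takes a genuinely different route. You introduce an explicit additive-noise representation $X^{(1)}=\epsilon X+W$ with $W\perp\!\!\!\perp X$, which makes joint normality of $(X^{(1)},X^{(2)})$ immediate and reduces all three parts to bilinear covariance computations. The paper instead works entirely through conditional moment identities: it computes $\E[X^{(1)}]$ and $\Var(X^{(1)})$ via the laws of total expectation and total variance, does the same for $X^{(2)}$, and then obtains the covariance from the identity $2\,\Cov(X^{(1)},X^{(2)})=\Var(X)-\Var(X^{(1)})-\Var(X^{(2)})$. Your construction makes the normality claims airtight in one step, whereas the paper's line ``the difference between two normally distributed variables is normal'' leans implicitly on joint normality. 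The advantage of the paper's approach is that it is a template that carries over verbatim to the negative-binomial and gamma cases (Propositions~\ref{prop:nbnuissance} and~\ref{prop:gamma_nuissance}), where no analogous additive-noise decomposition into independent pieces is available.
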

Part (iii) of Proposition~\ref{prop:normalnuissance} indicates that if we apply data thinning with too little noise  ($\tilde{\sigma}^2 < \sigma^2$), then $\xo$ and $\xt$ are positively correlated. On the other hand, if we apply data thinning with too much noise ($\tilde{\sigma}^2 > \sigma^2$), then $\xo$ and $\xt$ are negatively correlated. Similar results hold for the negative binomial distribution and the gamma distribution. 

\begin{proposition}
Suppose that we observe  $x$ from $X \sim \mathrm{NegativeBinomial}(r, p)$. We draw $\xo \mid X=x \sim \mathrm{BetaBin}\left(x, \epsilon \tilde{r}, (1-\epsilon) \tilde{r}\right)$ for some  $\tilde{r}$ that is not a function of $x$, and let $\xt = X-\xo$. Then
$
\mathrm{cov}\left(\xo, \xt \right) = \epsilon(1-\epsilon) r \left(\frac{1-p}{p}\right)^2 \left( 1 - \frac{r+1}{\tilde{r}+1}\right). 
$
\label{prop:nbnuissance}
\end{proposition}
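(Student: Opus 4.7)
The plan is to exploit the identity $\mathrm{cov}(\xo, \xt) = \mathrm{cov}(\xo, X) - \Var(\xo)$, which reduces the proposition to computing those two quantities. Both are most cleanly obtained by conditioning on $X$, so I would begin by recording the two conditional moments of $\xo \mid X$: because $\mathrm{BetaBin}(x, \epsilon \tilde{r}, (1-\epsilon)\tilde{r})$ has mean $\epsilon x$ and variance $\epsilon(1-\epsilon) x (\tilde{r}+x)/(\tilde{r}+1)$, these follow directly from the standard beta-binomial moment formulas.

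Given that $\E[\xo \mid X] = \epsilon X$, iterated expectation yields $\mathrm{cov}(\xo, X) = \epsilon \Var(X)$. The law of total variance then gives
\begin{equation*}
\Var(\xo) = \frac{\epsilon(1-\epsilon)}{\tilde{r}+1}\bigl(\tilde{r}\, \E[X] + \E[X^2]\bigr) + \epsilon^2 \Var(X).
\end{equation*}
Subtracting this from $\epsilon \Var(X)$, substituting $\E[X^2] = \Var(X) + \E[X]^2$, and factoring out $\epsilon(1-\epsilon)$ gives
\begin{equation*}
\mathrm{cov}(\xo, \xt) = \frac{\epsilon(1-\epsilon)}{\tilde{r}+1}\bigl(\tilde{r}\Var(X) - \tilde{r}\, \E[X] - \E[X]^2\bigr).
\end{equation*}

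Finally I would plug in the negative binomial moments $\E[X] = r(1-p)/p$ and $\Var(X) = r(1-p)/p^2$. Factoring $r(1-p)/p^2$ out of the parenthesized expression collapses it to $r(1-p)^2(\tilde{r}-r)/p^2$, and rewriting $(\tilde{r}-r)/(\tilde{r}+1)$ as $1 - (r+1)/(\tilde{r}+1)$ delivers the claimed formula. I expect the only genuinely delicate step to be the bookkeeping in this final algebraic simplification; everything else is a routine application of the tower property and the law of total variance combined with well-known moment formulas for the beta-binomial and negative binomial distributions.
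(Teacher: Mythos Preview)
Your proposal is correct and follows essentially the same strategy as the paper: condition on $X$, invoke the beta-binomial conditional moments, apply the law of total variance to get $\Var(\xo)$, and then substitute the negative binomial moments. The one minor difference is the covariance identity you start from: you use $\Cov(\xo,\xt)=\Cov(\xo,X)-\Var(\xo)$ together with $\Cov(\xo,X)=\epsilon\Var(X)$, whereas the paper uses $2\Cov(\xo,\xt)=\Var(X)-\Var(\xo)-\Var(\xt)$ and therefore also computes $\Var(\xt)$ (which is immediate since $\Var(\xt\mid X)=\Var(\xo\mid X)$). Your route is slightly more economical, but the two arguments are otherwise identical in substance and in the algebra that follows.
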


\begin{proposition}
Suppose that we observe  $x$ from $X \sim \mathrm{Gamma}(\alpha, \beta)$. We let $\xo = x \times Z$, where $Z \sim \mathrm{Beta}\left(\epsilon \tilde{\alpha}, (1-\epsilon) \tilde{\alpha}\right)$ for some $\tilde{\alpha}$ that is not a function of $x$. We let $\xt = X - \xo$. Then
$
\mathrm{cov}\left(\xo, \xt \right) = \eps (1-\eps) \frac{\alpha}{\beta^2} \left( 1 - \frac{\alpha+1}{\tilde{\alpha}+1}\right).
$
\label{prop:gamma_nuissance}
\end{proposition}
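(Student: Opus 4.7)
The plan is to compute $\mathrm{cov}(X^{(1)}, X^{(2)})$ by direct moment calculation, exploiting the fact that $Z$ is drawn independently of $X$. First I would note that by construction $X^{(1)} = X Z$ and $X^{(2)} = X - X^{(1)} = X(1-Z)$, so that $X^{(1)} X^{(2)} = X^{2}\, Z(1-Z)$. Since $X \perp\!\!\!\perp Z$, the covariance decomposes as
\[
\mathrm{cov}(X^{(1)}, X^{(2)}) = E[X^{2}]\,E[Z(1-Z)] - E[X]^{2}\,E[Z]\,E[1-Z].
\]

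Next I would plug in the standard moments. For $X \sim \mathrm{Gamma}(\alpha,\beta)$ one has $E[X]=\alpha/\beta$ and $E[X^{2}] = \mathrm{Var}(X)+E[X]^{2}=\alpha(\alpha+1)/\beta^{2}$. For $Z \sim \mathrm{Beta}(\epsilon\tilde\alpha,(1-\epsilon)\tilde\alpha)$, the parameters sum to $\tilde\alpha$, giving $E[Z]=\epsilon$ and $\mathrm{Var}(Z) = \epsilon(1-\epsilon)/(\tilde\alpha+1)$; consequently
\[
E[Z(1-Z)] = E[Z] - E[Z^{2}] = \epsilon(1-\epsilon) - \frac{\epsilon(1-\epsilon)}{\tilde\alpha+1} = \frac{\epsilon(1-\epsilon)\,\tilde\alpha}{\tilde\alpha+1}.
\]

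Finally, I would substitute these into the covariance identity and simplify. The expression reduces to
\[
\mathrm{cov}(X^{(1)}, X^{(2)}) = \frac{\epsilon(1-\epsilon)\,\alpha}{\beta^{2}}\cdot\frac{\tilde\alpha-\alpha}{\tilde\alpha+1},
\]
and a single algebraic rewrite $(\tilde\alpha-\alpha)/(\tilde\alpha+1) = 1-(\alpha+1)/(\tilde\alpha+1)$ delivers the claimed form.

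There is no real conceptual obstacle; the proof is a routine moment calculation. The only point requiring a moment of care is to track that the Beta parameters sum to $\tilde\alpha$, not to the true $\alpha$, since this mismatch is exactly what makes the covariance nonzero and dictates its sign: $\tilde\alpha>\alpha$ yields positive correlation and $\tilde\alpha<\alpha$ yields negative correlation, paralleling the Gaussian and negative binomial results in Propositions~\ref{prop:normalnuissance} and~\ref{prop:nbnuissance}.
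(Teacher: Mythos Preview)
Your argument is correct and reaches the stated formula, but it proceeds by a genuinely different route than the paper. The paper computes $\Var(\xo)$ and $\Var(\xt)$ separately via the law of total variance (conditioning on $X$), and then recovers the covariance through the identity $2\,\Cov(\xo,\xt)=\Var(X)-\Var(\xo)-\Var(\xt)$. You instead go straight to $\Cov(\xo,\xt)=E[\xo\xt]-E[\xo]E[\xt]$ and factor each term using the independence of $X$ and $Z$, which lets you avoid computing the two marginal variances altogether. Your approach is somewhat shorter and has the nice feature that the key quantity $E[Z(1-Z)]=\eps(1-\eps)\tilde\alpha/(\tilde\alpha+1)$ isolates exactly where the mismatch between $\tilde\alpha$ and $\alpha$ enters. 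The paper's approach, on the other hand, yields $\Var(\xo)$ and $\Var(\xt)$ as byproducts and keeps the three propositions structurally parallel. Either method is perfectly adequate here.
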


Propositions~\ref{prop:normalnuissance}--\ref{prop:gamma_nuissance} are proven in Appendix~\ref{appendix:nuissanceproof}.  Figure~\ref{fig_corVis} verifies these results empirically. The results in this section assume that $\tilde\sigma^2$, $\tilde{r}$, and $\tilde{\alpha}$ are not a function of $x$. In practice, one might estimate the unknown parameters $\sigma^2$, $r$, and $\alpha$ using additional data. 

\begin{figure}
\centering
\includegraphics[width=0.8\textwidth]{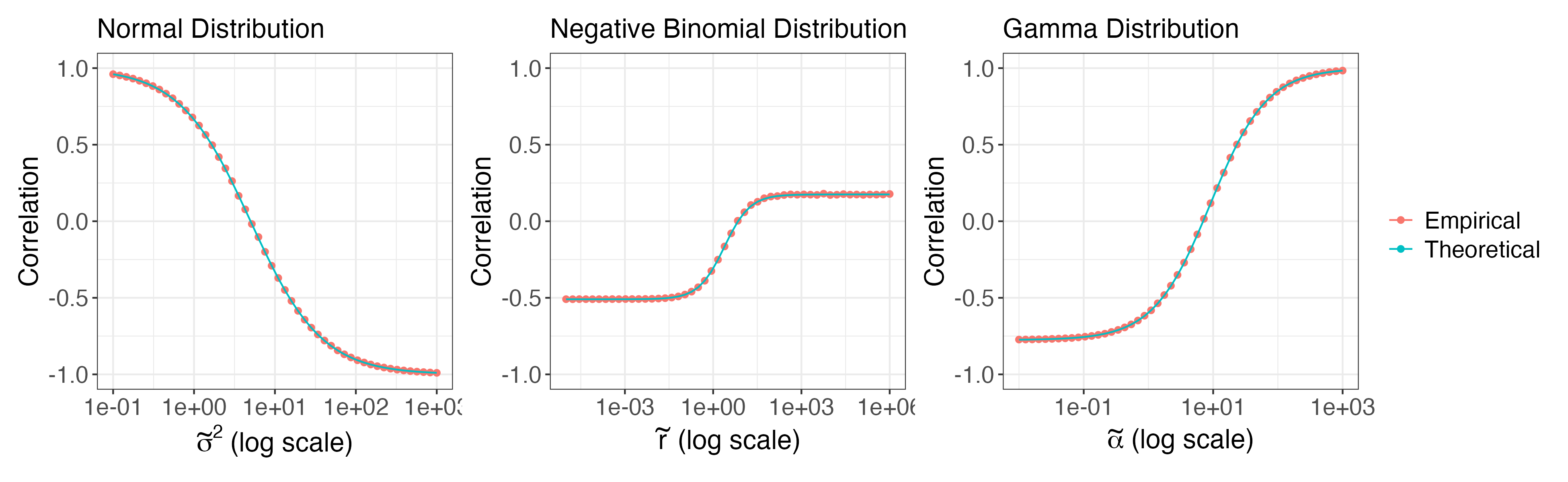}	
\caption{\emph{Left:} We generate 100,000 realizations of $X \sim \mathrm{N}(7,5)$. For 50 values of $\tilde{\sigma}^2$, we thin $X$ into $\xo$ and $\xt$ using $\tilde{\sigma}^2$ instead of $\sigma^2=5$. 
\emph{Center:} We generate 100,000 realizations of $X \sim \mathrm{NB}(7,0.7)$. For 50 values of $\tilde{r}$,  we thin $X$ into $\xo$ and $\xt$ using $\tilde{r}$ instead of $r=7$. 
\emph{Right:} We generate 100,000 realizations of $X \sim \mathrm{Gamma}(7,5)$. For 50 values of $\tilde{\alpha}$,  we thin $X$ into $\xo$ and $\xt$ using $\tilde{\alpha}$ instead of $\alpha=7$. 
\emph{All:} In each panel, for each value of the nuisance parameter, we display the empirical correlation between $\xo$ and $\xt$ (red dots), along with the theoretical correlation suggested by Propositions~\ref{prop:normalnuissance}--\ref{prop:gamma_nuissance} (blue lines). In all cases, we use $\epsilon = 0.44$ for thinning.  }
\label{fig_corVis}
\end{figure}

\section{Multifold data thinning}
\label{sec:multiplefolds}

Data thinning involves decomposing $X$ into $\xo$ and $\xt$, which each have the same distribution as $X$ (up to a known parameter scaling). It can be applied recursively to create $M$ independent data folds,  $\xo, \ldots, \xM$, that sum to $X$, as in the following example. 

\begin{example}[Recursive thinning of the normal distribution]
\label{ex:normal_multifold_hard}
Let $x$ denote a realization of $X \sim \N(\mu, \sigma^2)$. Given $\epsilon_1, \epsilon_2, \epsilon_3 \in (0,1)$ with $\epsilon_1+\epsilon_2+\epsilon_3= 1$, we first 	draw $\xo \mid X \sim \mathrm{N}\left(\epsilon_1 X, \epsilon_1 (1-\epsilon_1) \sigma^2\right)$. Let $X^{(2,3)} = X-\xo$. By Theorem~\ref{theorem:datathin}, $\left( \xo, X^{(2,3)} \right) \sim N\left(\eps_1 \mu, \eps_1 \sigma^2\right) \times N\left((1-\eps_1) \mu, (1-\eps_1) \sigma^2\right)$. 

We next draw $\xt \mid X^{(2,3)} \sim \mathrm{N}\left(\frac{\epsilon_2}{1-\eps_1} X^{(2,3)}, \frac{\epsilon_2}{1-\epsilon_1} (1-\frac{\epsilon_2}{1-\epsilon_1}) (1-\epsilon_1) \sigma^2\right)$, and let $X^{(3)} = X-\xo-\xt$. By Theorem~\ref{theorem:datathin}, $\left( \xt, X^{(3)}\right) \sim \N(\eps_2 \mu, \eps_2 \sigma^2) \times N(\eps_3 \mu, \eps_3 \sigma^2)$. 

Furthermore, since $\left( X^{(2)}, X^{(3)}\right)$ is a function of $X^{(2,3)}$, the pair $\left(X^{(2)}, X^{(3)}\right)$ remains independent of $\xo$. Thus, $\left( X^{(1)}, \xt, X^{(3)}\right) \sim N\left(\eps_1 \mu, \eps_1 \sigma^2\right) \times N(\eps_2 \mu, \eps_2 \sigma^2) \times N(\eps_3 \mu, \eps_3 \sigma^2)$. 
\end{example}

While Example~\ref{ex:normal_multifold_hard} can be extended to create $M>3$ folds, this recursive approach can be cumbersome. In Example~\ref{ex:gcs-gam-Kfold} of Section~\ref{sec1:intro}, we saw that, for the gamma distribution, there is a simple way to create multiple folds without recursion. 
  We will now provide a general form of this result. Let $G_{\lambda_1, \lambda_2, \ldots, \lambda_M, x}$ denote the joint distribution of $\left(X_1 , \ldots, X_M \right) \mid {X_1 + X_2 + \ldots + X_M} = x$, where 
$X_m \overset{\mathrm{ind.}}{\sim} F_{\lambda_{m}}$, for $m=1,\ldots,M$, and where $F_\lambda$ is a convolution-closed distribution. The following algorithm and theorem  mimic Algorithm~\ref{alg:datathin} and Theorem~\ref{theorem:datathin}. 

\begin{algorithm}[H]
\caption{Multifold data thinning.}	
\label{alg:datathin_manyfolds}
\Input{A realization $x$ of  $X \sim F_\lambda$, where $F_\lambda$ is a  convolution-closed distribution with parameter space $\Lambda$. Scalars $\epsilon_1,\ldots, \epsilon_M \in (0,1)$ such that $\sum_{m=1}^M \epsilon_m = 1$ and $\epsilon_m \lambda \in \Lambda$ for $m=1,\ldots M$. }
Draw $\left(X^{(1)},\ldots,X^{(M)}\right) \sim G_{\epsilon_1 \lambda, \epsilon_2 \lambda, \ldots, \epsilon_M \lambda, x}$. \\
\Output{$\left(X^{(1)},\ldots,X^{(M)}\right)$}
\end{algorithm}

\begin{theorem}
\label{theorem:datathin_manyfolds}
Suppose we apply Algorithm~\ref{alg:datathin_manyfolds} to a realization $x$ of $X \sim F_\lambda$, for a convolution-closed distribution $F_\lambda$. Then, the following results hold: (i) $X^{(m)}  \sim F_{\epsilon_m \lambda}$ for $m=1,\ldots,M$; (ii) $\xo,\ldots, \xM$ are mutually independent; 
(iii) $\xo+\xt+\cdots+X^{(M)}=X$; and (iv) if $F_{\lambda}$ satisfies the linear expectation property (Definition~\ref{def:linExp}), then $\E[\xm] = \eps_m \E[X]$ for $m = 1,\ldots,M$.
\end{theorem}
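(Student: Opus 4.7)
My plan is to mirror the proof strategy suggested for Theorem~\ref{theorem:datathin}, extended to $M$ components by setting up an auxiliary independent construction whose joint law matches the algorithm's output. Specifically, I would introduce independent random variables $Y_1, \ldots, Y_M$ with $Y_m \sim F_{\eps_m \lambda}$ for each $m$, which is well-defined by the input assumption that $\eps_m \lambda \in \Lambda$.

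First I would show that $S := Y_1 + \cdots + Y_M \sim F_\lambda$ by iterating the convolution-closed property. Pairwise closure gives $Y_1 + Y_2 \sim F_{(\eps_1 + \eps_2)\lambda}$, whose parameter is in $\Lambda$ by the definition of convolution-closedness; adding the independent $Y_3$ yields $F_{(\eps_1 + \eps_2 + \eps_3)\lambda}$, and so on. Since $\sum_m \eps_m = 1$, the telescoped sum lies in $F_\lambda$. Next, by the definition of $G_{\eps_1 \lambda, \ldots, \eps_M \lambda, x}$ stated just before Algorithm~\ref{alg:datathin_manyfolds}, the joint law of $(Y_1, \ldots, Y_M)$ disintegrates as: draw $S \sim F_\lambda$, and then draw $(Y_1, \ldots, Y_M) \mid S \sim G_{\eps_1 \lambda, \ldots, \eps_M \lambda, S}$. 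This is exactly the procedure carried out by Algorithm~\ref{alg:datathin_manyfolds}, with the observed $X$ playing the role of $S$. Therefore $(X^{(1)}, \ldots, X^{(M)}) \stackrel{d}{=} (Y_1, \ldots, Y_M)$, which simultaneously delivers (i) the correct marginals $X^{(m)} \sim F_{\eps_m \lambda}$ and (ii) mutual independence.

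Part (iii) is immediate from construction: $G_{\eps_1 \lambda, \ldots, \eps_M \lambda, x}$ is, by its very definition as a conditional distribution given the sum, supported on tuples summing to $x$, so $X^{(1)} + \cdots + X^{(M)} = X$ almost surely. Part (iv) then follows from (i) combined with the linear expectation property of Definition~\ref{def:linExp}: since $\E[X^{(m)}]$ is a linear function of its parameter and $X^{(m)} \sim F_{\eps_m \lambda}$, we get $\E[X^{(m)}] = \eps_m \E[X]$ directly.

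The main obstacle is ensuring the disintegration step is rigorous, i.e., that the conditional law of $(Y_1, \ldots, Y_M)$ given $S$ really coincides with $G_{\eps_1 \lambda, \ldots, \eps_M \lambda, S}$ rather than merely resembling it; for distributions admitting densities this is a direct factorization of the joint density through the sum, while in general it is the very content of the definition of $G$. An alternative proof route would be induction on $M$ using Theorem~\ref{theorem:datathin}, echoing the construction in Example~\ref{ex:normal_multifold_hard}, but that approach requires explicit bookkeeping to carry mutual independence across recursive steps, which the auxiliary-construction approach avoids in one stroke.
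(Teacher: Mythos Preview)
Your proposal is correct and follows essentially the same route as the paper's proof: introduce an auxiliary independent tuple $(Y_1,\ldots,Y_M)$ with $Y_m \sim F_{\eps_m\lambda}$, observe that its sum has the same law as $X$ by the convolution-closed property, and conclude that the algorithm's output shares the joint law of $(Y_1,\ldots,Y_M)$ because both arise by conditioning the same joint distribution on the same marginal for the sum. Your treatment is slightly more explicit than the paper's in two places---you spell out the inductive telescoping for the sum and you isolate part~(iii) as a direct consequence of $G$ being a conditional law given the sum---but the argument is otherwise identical.
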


The proof of Theorem~\ref{theorem:datathin_manyfolds} is included in Appendix~\ref{appendix:mainproof}, and is a straightforward extension of that of Theorem~\ref{theorem:datathin}. The intuition for parts (i)-(iii) is as follows: we know that $X \sim F_\lambda$ could have arisen as the sum of $M$ mutually independent random variables $X_1,\ldots,X_M$ such that $X_m \sim F_{\epsilon_m \lambda}$. If we draw $\left(X^{(1)},\ldots,X^{(M)}\right)  | X=x \sim G_{\epsilon_1 \lambda, \epsilon_2 \lambda, \ldots, \epsilon_M \lambda, x}$, then the joint distribution of  $\left(X^{(1)},\ldots,X^{(M)}\right)$ equals the joint  distribution of $\left(X_1,\ldots,X_M\right)$, i.e. it is the joint distribution of $M$ independent random variables with distributions $F_{\eps_1 \lambda},\ldots,F_{\eps_M\lambda}$. Part (iv) follows directly from Definition~\ref{def:linExp}. We now revisit the case of the Gaussian distribution from Example~\ref{ex:normal_multifold_hard}.

\begin{example}[Multifold thinning of the normal distribution]
\label{ex:normal_multifold_easy}
Let $X \sim \N(\mu, \sigma^2)$ and let $\epsilon_1, \epsilon_2, \epsilon_3 > 0$ with $\sum_{i=1}^3 \epsilon_i = 1$.  To generate $M=3$ independent folds of the data, we draw
$$
\small
\left[ \begin{matrix}
\xo \\
\xt \\
X^{(3)}	
\end{matrix} \right] \mid X=x \sim N \left( \left[ \begin{matrix}
\eps_1 x \\
\eps_2 x  \\
\eps_3 x
\end{matrix} \right], 
\left[ \begin{matrix}
\eps_1 (1-\eps_1) \sigma^2 & -\eps_1 \eps_2 \sigma^2 & -\eps_1 \eps_3 \sigma^2 \\ 
-\eps_1 \eps_2 \sigma^2 & \eps_2 (1-\eps_2) \sigma^2 & -\eps_2 \eps_3 \sigma^2 \\
-\eps_1 \eps_3 \sigma^2 & -\eps_2 \eps_3 \sigma^2 & \eps_3 (1 - \eps_3) \sigma^2  
\end{matrix} \right]
\right).
$$
One can verify that this multivariate normal corresponds to  $G_{\eps_1 \lambda, \eps_2 \lambda, \eps_3 \lambda, x}$.
By Theorem~\ref{theorem:datathin_manyfolds}, $\xo, \xt,$ and $X^{(3)}$ are independent and  $X^{(m)} \sim \N(\eps_m \mu, \eps_m \sigma^2)$ for $m=1,2,3$. This distribution $G_{\eps_1 \lambda, \eps_2 \lambda, \eps_3 \lambda, x}$ is a degenerate multivariate normal distribution, which enforces the constraint that the realized values of $\xo, \xt$, and $X^{(3)}$ sum to $x$. 
\end{example}

Table~\ref{tab:folds_table} reveals that  $G_{\epsilon_1 \lambda, \epsilon_2 \lambda, \ldots, \epsilon_M \lambda, x}$ in Algorithm~\ref{alg:datathin_manyfolds} has a very simple form for every univariate distribution in Table~\ref{tab:maintable}. We omit the multivariate distributions to avoid cumbersome notation. Once again, in cases where the conditional distribution is not a recognizable distribution, if its density is known up to a normalizing constant we can generate $\xo,\ldots,\xM$ using sampling techniques. 

\begin{table}
\caption{Details of how to perform multifold data thinning (Algorithm~\ref{alg:datathin_manyfolds}) for several common univariate distributions, where where ${\epsilon} = (\eps_1,\ldots,\eps_M)^T$. In the decomposition of the binomial distribution, $\epsilon_m r$ must be an integer. Each row can be verified using properties of these distributions.} 
\scriptsize
\centering
\begin{tabular}{l l l}
Distribution of ${X}$ & Generate $\left(\xo,\ldots,\xM\right) \mid {X}=X$ as: &  Dist. of $X^{(m)}$ \\
\hline
$\text{Poisson}(\lambda)$ & $\left(\xo,\ldots,\xM\right) \mid {X}=x \sim \text{Multinomial}(x, \epsilon_1,\ldots,\epsilon_M) $. &
$\text{Poisson}(\epsilon_m \lambda)$ \\
$\text{N}(\mu, \sigma^2)$ &  $\left(\xo,\ldots,\xM\right)  \mid {X}=x \sim \text{N} \left( \mu \boldsymbol{\eps}, \sigma^2 \mathrm{diag}({\epsilon}) - \sigma^2 {\epsilon} {\epsilon}^T \right)$, & $\text{N}(\epsilon_m \mu, \epsilon_m \sigma^2)$,    \\
$\mathrm{NegativeBinomial}(r, p)$ &   $\left(\xo,\ldots,\xM\right) \mid {X}=x \sim \text{DirichletMultinomial}(X, \epsilon_1 r, \ldots, \epsilon_M  r) $. & $\text{NegativeBinomial}(\epsilon_m r, p)$  \\
$\text{Gamma}(\alpha, \beta)$ & Draw $Z \sim \text{Dirichlet}\left(\epsilon_1 \alpha, \ldots, \epsilon_M \alpha \right)$, and and let $\left(\xo,\ldots,\xM\right) = x \cdot Z$ & $\text{Gamma}(\epsilon_m \alpha, \beta)$  \\
$\text{Exponential}(\lambda)$ & Draw $Z \sim  \mathrm{Dirichlet}(\eps_1, \ldots, \eps_M)$, and let $\left(\xo,\ldots,\xM\right) = x \cdot Z$. & $\text{Gamma}(\epsilon_m, \lambda)$    \\
$\text{Binomial}(r,p)$ & $\left(\xo,\ldots,\xM\right) \mid {X}=x \sim \text{MultivariateHypergeometric}(\epsilon_1 r, \ldots, \epsilon_M r, x)$. & $\text{Binomial}(\epsilon_m r,p)$  \\
\end{tabular}	
\label{tab:folds_table}
\end{table}

The role of the parameters $\eps_1,\ldots,\eps_M$ in Algorithm~\ref{alg:datathin_manyfolds} is discussed in Section~\ref{subsec:eps}. We now consider the following extension of  Example~\ref{ap:Eval}. 

\begin{example}[Cross validation using multifold thinning]
In the setting of Example~\ref{ap:Eval}, we apply Algorithm~\ref{alg:datathin_manyfolds} with parameters $\epsilon_1,\ldots,\epsilon_M$ to either each element or each row of $X$ such that each element ${X}_{ij}$ is thinned into ${X}_{ij}^{(1)}, \ldots, {X}_{ij}^{(M)}$.

Then, for $m = 1,\ldots,M$, we first define $X^{(-m)} := X - X^{(m)}$. We obtain $\hat{\mu}\left( {X}^{(-m)} \right)$, which is an estimator of $\E[{X}^{(-m)}] = (1-\eps_m)\E[{X}]$. We then compute a loss function between $\hat{\mu}\left( {X}^{(-m)} \right)$ and ${X}^{(m)}$. For example, as in Example~\ref{ex:lossMSE}, we can compute the mean squared error between $\frac{\eps_m}{1-\eps_m}\hat{\mu}\left({X}^{(-m)}\right)$ and $\xm$. We evaluate the estimator $\hat{\mu}(\cdot)$ by averaging the loss across folds. 
\label{ap:crossVal}
\end{example}

The advantage of multifold thinning (Example~\ref{ap:crossVal}) over single fold thinning with $\epsilon=1/M$ (Example~\ref{ap:Eval}) is reduction of the variance of the loss function via averaging.  We will demonstrate the practical advantages of multifold thinning in Section~\ref{sec:sim}.

\section{Comparing data thinning and sample splitting}
\label{sec_roleEps}

In comparing data thinning and sample splitting in a particular setting, there are two considerations. First, we must figure out if each method is applicable. Then, in settings where both are applicable, we must figure out if one method is preferable. 

Data thinning requires an assumption that each 
entry in our dataset is drawn from a specific (but possibly different) convolution-closed distribution. Sample splitting requires no such parametric assumption. On the other hand, we cannot apply sample splitting if our task requires estimating a parameter for every individual observation or drawing conclusions about specific observations. For example, suppose that we wish to evaluate the performance of a clustering algorithm. After sample splitting, clustering the observations in the training set does not yield cluster assignments for the observations in the test set, and thus there is nothing to evaluate on the test set \citep[see  e.g.][]{gao2022selective, fu2020estimating}. Similarly, it is not clear how to use sample splitting to validate a low-rank matrix approximation, for which a latent coordinate must be estimated for each of the $n$ observations \citep[see, e.g.][]{owen2009bi}. We focus on these examples, where sample splitting is not an option, in Sections~\ref{sec:sim} and \ref{sec:data}. 

In Section~\ref{subsec:eps}, we show that the parameters $\epsilon_1,\ldots,\epsilon_M$ in multifold data thinning (Algorithm~\ref{alg:datathin_manyfolds}) control the allocation of information across folds. In Section~\ref{subsec:theory_ss}, we use this information allocation result to theoretically argue that sample splitting and data thinning achieve similar performance, but that data thinning may be preferable in settings where the observations are not identically distributed. In particular, we see that sample splitting is not an attractive choice for ``fixed-covariate" regression in the presence of high-leverage points. In Section~\ref{subsec:empirical_ss}, we empirically compare data thinning and sample splitting in a setting where both are applicable. 

\subsection{Role of the parameters $\epsilon_1,\ldots, \epsilon_M$ in multifold data thinning}
\label{subsec:eps}

In Algorithm~\ref{alg:datathin_manyfolds}, the parameters $\epsilon_1,\ldots,\epsilon_M$ determine how the information in a random variable $X$ about an unknown parameter is allocated across folds of data.  

\begin{theorem}
\label{theorem_epsilon}
Suppose that we thin a random variable $X$ using Algorithm~\ref{alg:datathin_manyfolds} with parameters
$\epsilon_1, \ldots, \epsilon_M$ to obtain $\xo, \ldots, \xM$. Let $I_X(\theta)$ denote the Fisher information contained in $X$ about an unknown parameter $\theta$, i.e. a parameter in the distribution of $X$ that does not appear in $G_{\epsilon_1 \lambda, \ldots, \epsilon_M \lambda, x}$. Assume that this Fisher information exists. Then  $I_{X^{(m)}}(\theta) = \epsilon_m I_{X}(\theta)$ for $m=1,\ldots,M$. 
\end{theorem}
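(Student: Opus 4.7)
The plan is to combine two structural facts to obtain an aggregate identity $\sum_m I_{\xm}(\theta)=I_X(\theta)$, and then individualize it via symmetry. The two facts are: the conditional distribution $G_{\epsilon_1\lambda,\ldots,\epsilon_M\lambda,x}$ does not depend on $\theta$, so observing the folds yields no extra information about $\theta$ beyond what $X$ already provides; and the folds are mutually independent, so their Fisher informations add.

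First I would invoke Theorem~\ref{theorem:datathin_manyfolds} to obtain that $\xo,\ldots,\xM$ are mutually independent with $\xm\sim F_{\epsilon_m\lambda}$ and $\sum_m \xm=X$. Because $X$ is determined by the folds, the joint density admits the factorization
$$
f_{\xo,\ldots,\xM}(x^{(1)},\ldots,x^{(M)};\theta)=f_X(x;\theta)\,g\!\left(x^{(1)},\ldots,x^{(M-1)}\mid x\right),
$$
where $x=\sum_m x^{(m)}$ and $g$, which arises from $G_{\epsilon_1\lambda,\ldots,\epsilon_M\lambda,x}$, is $\theta$-free by hypothesis. The $\theta$-score of the joint therefore reduces to the score of $X$, so $I_{(\xo,\ldots,\xM)}(\theta)=I_X(\theta)$. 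Mutual independence also gives $I_{(\xo,\ldots,\xM)}(\theta)=\sum_m I_{\xm}(\theta)$, and together these yield
$$
\sum_{m=1}^M I_{\xm}(\theta)=I_X(\theta).
$$

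To individualize this aggregate identity, I would first assume each $\epsilon_m$ is rational. Running Algorithm~\ref{alg:datathin_manyfolds} with $N$ equal folds produces iid $Y_1,\ldots,Y_N\sim F_{\lambda/N}$; the identity above combined with symmetry forces $I_{Y_j}(\theta)=I_X(\theta)/N$ for every $j$. When $\epsilon_m=p_m/N$ with a common denominator, convolution-closure guarantees $\sum_{j=1}^{p_m}Y_j\sim F_{\epsilon_m\lambda}$, which is exactly the distribution of $\xm$; by independence, $I_{\xm}(\theta)=p_m\cdot I_X(\theta)/N=\epsilon_m I_X(\theta)$. The main obstacle is extending this from rational to arbitrary $\epsilon_m\in(0,1)$, which I would do by appealing to continuity of $\delta\mapsto I_{F_\delta}(\theta)$ in the convolution parameter (a mild regularity condition beyond the theorem's existence assumption), or equivalently by recognizing the additivity relation as Cauchy's functional equation, whose only nonnegative solution is linear.
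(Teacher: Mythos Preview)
Your overall strategy matches the paper's: establish an aggregate identity, specialize to equal folds, then reduce rational $\epsilon_m$ to the equal-fold case via a common denominator. The aggregate identity $\sum_m I_{\xm}(\theta) = I_X(\theta)$ is correctly argued; the paper derives the same equality via the Fisher-information chain rule and a data-processing inequality rather than your density factorization, but the substance is identical. The gap is at the step you justify ``by independence'': independence of $Y_1,\ldots,Y_{p_m}$ yields $I_{(Y_1,\ldots,Y_{p_m})}(\theta) = p_m\, I_{Y_1}(\theta)$, but what you actually need is $I_{\sum_{j} Y_j}(\theta) = p_m\, I_{Y_1}(\theta)$, and the Fisher information in a sum is in general only bounded above by the information in the joint. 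Closing this gap is precisely the content of the paper's Lemma~2, which shows $I_{\sum_{m=1}^K X^{(m)}}(\theta) = \tfrac{K}{M} I_X(\theta)$ by thinning the partial sum back into equal pieces and sandwiching the two informations.

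The fix is available within your own framework: apply your aggregate identity a second time, now to a $p_m$-fold equal thinning of a variable distributed as $F_{\epsilon_m\lambda}$. Its folds are iid $F_{\lambda/N}$, each carrying information $I_X(\theta)/N$ by your first application, so the aggregate identity gives $I_{\xm}(\theta) = p_m\, I_X(\theta)/N = \epsilon_m I_X(\theta)$ directly. (Equivalently, thin $X$ into one fold of size $\epsilon_m$ together with $N-p_m$ folds each of size $1/N$, and solve the resulting linear equation.) As for irrational $\epsilon_m$: the paper itself restricts to the rational case and does not attempt the continuity or Cauchy-equation extension you sketch, so that portion of your proposal goes beyond the paper's proof and would require regularity not assumed in the theorem statement.
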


\begin{remark}
\label{remark_parameter}
In the context of Theorem~\ref{theorem_epsilon}, the parameter $\theta$ may or may not be a function of the convolution-closed parameter $\lambda$, but it must be a parameter that is unknown during the thinning process. Below, we list a few examples where Theorem~\ref{theorem_epsilon} applies and where it does not. 
\begin{itemize}
	\item \textbf{Poisson distribution:} Let $X \sim \mathrm{Poisson}(\lambda)$, and suppose we thin $X$ to obtain $\xm \sim \mathrm{Poisson}(\eps_m \lambda)$. As $\lambda$ is unknown during the thinning process, Theorem~\ref{theorem_epsilon} says that $I_{\xm}(\lambda) = \epsilon_m I_X(\lambda)$. We can easily verify this by direct calculation: $I_{X}(\lambda) =\frac{1}{\lambda}$ and $I_{\xm}(\lambda) = \frac{\eps_m}{\lambda}$. 
	\item \textbf{Binomial distribution:} Let $X \sim \mathrm{Binomial}(r,p)$, and suppose that we thin $X$ to obtain $\xm \sim \mathrm{Binomial}(\eps r,p)$. Then $I_X(p) = r/(p(1-p))$, and direct calculation verifies that $I_{\xm}(p) = \eps_m I_X(p)$. On the other hand, Theorem~\ref{theorem_epsilon} makes no claims about the parameter $r$, since $r$ must be known during thinning. 
	\item \textbf{Gaussian distribution:} Let $X \sim \mathrm{N}(\mu,\sigma^2)$, and suppose that we thin $X$ to obtain $\xm \sim \mathrm{N}(\eps_m \mu, \eps_m \sigma^2)$. Then direct computation verifies that $I_{\xm}(\mu) = \eps_m I_X(\mu)$. 
	However, Theorem~\ref{theorem_epsilon} makes no claims about the parameter $\sigma^2$, since it must be known during thinning. 
\end{itemize}
\end{remark}

Theorem~\ref{theorem_epsilon} implies that when choosing $\epsilon_1 ,\ldots, \epsilon_M$ for Algorithm~\ref{alg:datathin_manyfolds} or when choosing $\epsilon$ for Algorithm~\ref{alg:datathin}, one should consider how much information to devote to the training task (i.e. model fitting) as opposed to the testing task (i.e. model evaluation). In Example~\ref{ap:Eval}, as $\epsilon$ increases, the quality of the estimator  $\hat{\mu}(\xo)$ increases, but the information available for computing the loss between this estimator and $\xt$ decreases. 

Recall that applying Algorithm~\ref{alg:datathin} to $X$ with parameter $\epsilon=\frac{1}{M}$ yields $\xo$ and $\xt$ with the same distributions as $\xm$ and $X - \xm$ when we apply Algorithm~\ref{alg:datathin_manyfolds} to $X$ with $\epsilon_1=\ldots=\epsilon_M = \frac{1}{M}$. Thus, the information allocation between $\xm$ and $X - \xm$ after multifold thinning is the same as the information allocation between $\xo$ and $\xt$ after two-fold thinning with $\epsilon=\frac{1}{M}$. The advantage of using multiple folds for model validation (i.e. Example~\ref{ap:crossVal} rather than Example~\ref{ap:Eval}) comes from the reduction in variance that results from averaging the loss function across folds.
 
\subsection{Theoretical comparison to sample splitting}
\label{subsec:theory_ss}

In this section, we focus on the case of two-fold thinning (Algorithm~\ref{alg:datathin}) for simplicity. Understanding the way in which the parameter $\epsilon$ in Algorithm~\ref{alg:datathin} partitions information about parameters of interest helps us draw direct connections between data thinning (with parameter $\epsilon$) and sample splitting (with $\epsilon$ denoting the proportion of observations assigned to the training set). 

\begin{corollary}
\label{theorem_iid}
Suppose that we observe $n$ independent and identically distributed (iid) random variables $\bold{X} = \left(X_1,\ldots,X_n \right)$, where $X_i \sim F_{\lambda}$. Assume that $\epsilon n$ is an integer and that $\epsilon \lambda \in \Lambda$. Consider the following two methods for splitting $\bold{X}$ into independent training and test sets. 
\begin{itemize}
	\item \textbf{Sample splitting: } Assign a specific set of $\epsilon n$ observations to the training set, denoted $\bold{X}^{\mathrm{train}}_{\mathrm{ss}}$, and the remaining $(1-\epsilon) n$ observations to the test set, denoted $\bold{X}^{\mathrm{test}}_{\mathrm{ss}}$. 
	\item \textbf{Data thinning: } For $i=1,\ldots,n$, thin $X_i$ into $X_i^{(1)}$ and $X_i^{(2)}$ by applying Algorithm~\ref{alg:datathin} with parameter $\epsilon$. Let $\bold{X}^{\mathrm{train}}_{\mathrm{dt}}=\left(X_1^{(1)}, \ldots, X_n^{(1)}\right)$ be the training set and let $\bold{X}^{\mathrm{test}}_{\mathrm{dt}} = \left(X_1^{(2)}, \ldots, X_n^{(2)}\right) $ be the test set. 
\end{itemize}
Let $\theta$ be an unknown parameter of interest, as in Theorem~\ref{theorem_epsilon}. Then, $I_{\bold{X}^{\mathrm{train}}_{\mathrm{ss}}}(\theta) = I_{\bold{X}^{\mathrm{train}}_{\mathrm{dt}}}(\theta) = \epsilon I_{\bold{X}}(\theta)$, 
and $I_{\bold{X}^{\mathrm{test}}_{\mathrm{ss}}}(\theta) = I_{\bold{X}^{\mathrm{test}}_{\mathrm{dt}}}(\theta) = (1-\epsilon) I_{\bold{X}}(\theta)$.
\end{corollary}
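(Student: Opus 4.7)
The plan is to reduce everything to two facts: (a) Fisher information is additive across independent observations, and (b) Theorem~\ref{theorem_epsilon} tells us exactly how information splits under data thinning of a single observation. Both identities in the corollary then follow by bookkeeping.

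First, I would establish the sample-splitting side. Since the $X_i$ are iid from $F_\lambda$, additivity of Fisher information under independence gives $I_{\mathbf{X}}(\theta) = n\, I_{X_1}(\theta)$. The training set $\mathbf{X}^{\mathrm{train}}_{\mathrm{ss}}$ consists of $\epsilon n$ of these iid draws, so again by additivity $I_{\mathbf{X}^{\mathrm{train}}_{\mathrm{ss}}}(\theta) = \epsilon n\, I_{X_1}(\theta) = \epsilon\, I_{\mathbf{X}}(\theta)$, and the identical argument applied to the remaining $(1-\epsilon)n$ observations gives the test-set claim. This half is essentially definitional.

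Next, I would handle the data-thinning side. For each $i$, apply Theorem~\ref{theorem_epsilon} with $M=2$ to conclude $I_{X_i^{(1)}}(\theta) = \epsilon\, I_{X_i}(\theta)$ and $I_{X_i^{(2)}}(\theta) = (1-\epsilon)\, I_{X_i}(\theta)$. The key independence step is to justify that $X_1^{(1)}, \ldots, X_n^{(1)}$ are mutually independent: because the thinning of $X_i$ depends only on $X_i$ (and fresh randomization independent across $i$), and the $X_i$ are independent across $i$, the full collection of thinned variables inherits independence across $i$. Combining this with the per-observation information identity and additivity of Fisher information yields $I_{\mathbf{X}^{\mathrm{train}}_{\mathrm{dt}}}(\theta) = \sum_{i=1}^n \epsilon\, I_{X_i}(\theta) = \epsilon\, I_{\mathbf{X}}(\theta)$, and likewise for the test set.

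The proof is essentially routine once Theorem~\ref{theorem_epsilon} is in hand; the only subtlety worth flagging is that cross-sample independence among the thinned training observations is not automatic from the statement of Theorem~\ref{theorem:datathin} (which only asserts $X_i^{(1)} \perp\!\!\!\perp X_i^{(2)}$ within observation $i$), but follows because the auxiliary randomness used to produce $X_i^{(1)}$ from $X_i$ can be taken independent across $i$. I would state this explicitly in one sentence and then close the argument.
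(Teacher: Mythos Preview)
Your proposal is correct and matches what the paper intends: the corollary is presented as an immediate consequence of Theorem~\ref{theorem_epsilon} together with additivity of Fisher information under independence, and the paper does not supply a separate detailed proof. Your explicit remark about cross-sample independence of the thinned variables is a useful clarification that the paper leaves implicit.
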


In Corollary~\ref{theorem_iid}, $I_{\bold{X}^{\mathrm{train}}_{\mathrm{ss}}}(\theta)$ takes on the same value for any specific allocation of datapoints to the training set. This means that if we instead randomly allocate data points to the training set, as is typical in practice, the information split remains identical to data thinning. Consequently, we expect sample splitting and data thinning to have similar performance regarding inference on unknown parameters in settings where both are options and where the datapoints are independent and identically distributed. A similar point was made by \cite{leiner2021data}, who view their data fission technique as a ``continuous analog" of sample splitting, since the requirement for sample splitting that $\epsilon n$ be an integer limits the choice of $\epsilon$, especially when $n$ is small. 

We next consider a setting where the observations are independent but not identically distributed, and thus the difference between data thinning and sample splitting is more pronounced. 

\begin{example}[Fixed-covariate regression]
\label{theorem_noniid}
Suppose that we observe a fixed set of covariates $Z_1,\ldots, Z_n$. Suppose that $X_i \overset{\mathrm{ind.}}{\sim} N\left( \beta Z_i, \sigma^2 \right)$ for $i=1,\ldots,n$, and let $\bold{X} = \left(X_1,\ldots,X_n\right)$. In this setting, $I_{X_i}(\beta) = \frac{Z_i^2}{\sigma^2}$, meaning that observations with larger values of $Z_i^2$ (high-leverage points) contain more information about $\beta$, the unknown parameter of interest. Let $I_{\bold{X}^{\mathrm{train}}_{\mathrm{ss}}}(\beta)$ and $I_{\bold{X}^{\mathrm{train}}_{\mathrm{dt}}}(\beta)$ be defined as in Corollary~\ref{theorem_iid}, where the unknown parameter of interest is the slope $\beta$. Then
$$
I_{\bold{X}^{\mathrm{train}}_{\mathrm{dt}}}(\beta) = \sum_{i=1}^n I_{X_i^{(1)}}(\beta)= \sum_{i=1}^n  \epsilon \frac{Z_i^2}{\sigma^2} = \epsilon I_{\bold{X}}(\beta).
$$
However, 
$$
I_{\bold{X}_{\mathrm{ss}}^{\mathrm{train}}}(\beta) = \sum_{i \in \mathrm{train}} I_{X_i}(\beta) = \sum_{i \in \mathrm{train}}  \frac{Z_i^2}{\sigma^2} \neq \epsilon I_{\bold{X}}(\beta),
$$
where $\mathrm{train}$ denotes the specific indices of the observations assigned to the training set. Thus, while data thinning always allocates a fraction $\epsilon$ of the Fisher information to the training set, the information allocation of sample splitting depends on the specific assignment of observations to the training set. 
\end{example}

Example~\ref{theorem_noniid} shows that $I_{\bold{X}_{\mathrm{ss}}^{\mathrm{train}}(\beta)} \neq  I_{\bold{X}^{\mathrm{train}}_{\mathrm{dt}}(\beta)}$ for a particular assignment of observations to the training set.  However, when we perform sample splitting, we typically randomly allocate datapoints to the training set. Under such a procedure, $I_{\bold{X}_{\mathrm{ss}}^{\mathrm{train}}(\theta)}$ from Example~\ref{theorem_noniid} becomes a random variable that depends on the particular split of the data. 
If all permissible random splits of the data are equally likely, then $
E\left[I_{\bold{X}_{\mathrm{ss}}^{\mathrm{train}}(\theta)}\right] = I_{\bold{X}^{\mathrm{train}}_{\mathrm{dt}}(\theta)}$, where the expected value is taken over all permissible splits of the data. The same equality holds for the test set, i.e. $
E\left[I_{\bold{X}_{\mathrm{ss}}^{\mathrm{test}}(\theta)}\right] = I_{\bold{X}^{\mathrm{test}}_{\mathrm{dt}}(\theta)}$. Despite this equivalence in the \emph{expected} information allocation, Proposition 1 from \cite{rasines2021splitting} provides clever insight that tells us why we might prefer the non-random information allocation of data thinning in this setting. Remark~\ref{remark_rasines_info} instantiates the general result of \cite{rasines2021splitting} to the simple setting of Example~\ref{theorem_noniid}. 

\begin{remark}[Effect on confidence interval width]
\label{remark_rasines_info}
In the context of Example~\ref{theorem_noniid}, suppose that our ultimate goal involves forming a confidence interval for $\beta$ using the test set. If we are using a maximum-likelihood estimator for $\beta$, then the width of the confidence interval computed on the test set under sample splitting and data thinning are proportional, respectively, to  $\frac{1}{I_{\bold{X}_{\mathrm{ss}}^\mathrm{test}}(\beta)}$ and  $\frac{1}{I_{\bold{X}_{\mathrm{dt}}^\mathrm{test}}(\beta)}$.  Jensen's inequality yields the following result:
$$
\E\left[ \frac{1}{I_{\bold{X}_{\mathrm{ss}}^\mathrm{test}}(\beta)} \right] \geq \frac{1}{\E\left[ I_{\bold{X}_{\mathrm{ss}}^\mathrm{test}}(\beta) \right] } = \frac{1}{I_{\bold{X}^{\mathrm{test}}_{\mathrm{dt}}(\beta)}}.
$$
Thus, the confidence intervals for $\beta$ computed using sample splitting will be wider, on average, than those computed using data thinning in the setting where our observations are not identically distributed. A natural corollary is that sample splitting will achieve lower power than data thinning in this setting, as we will see in Section~\ref{subsec:empirical_ss}. 
\end{remark}

\subsection{Empirical comparison to sample splitting}
\label{subsec:empirical_ss}

We now show empirically that sample splitting and data thinning achieve comparable performance in settings where both are applicable and where the observations are independent and identically distributed. 

We let $n=100$ and $p=20$, and we generate $10,000$ realizations of 
$\bold{Z} \in \mathbb{R}^{n \times p}$, where the entries of $\bold{Z}$ are drawn independently from $N(0,1)$. For each realization, we then generate $\bold{X} \mid \bold{Z} \sim  N_n \left( \bold{Z} \beta , I_n \right)$, where $\beta_1 = \beta_2 = \ldots = \beta_5 = \beta^*$ and $\beta_6 =\beta_7 =  \ldots = \beta_{20} = 0$. Our goal is to perform model selection to identify the covariates with non-zero entries in $\beta$, and then to form confidence intervals for the coefficients of the selected covariates. 

As we cannot naively use the entire dataset $\left(\boldsymbol{Z}, \boldsymbol{X}\right)$ to do both model selection and inference, we consider the following approach. 
\begin{list}{}{}
	\item[\textbf{Step 1:}] Split the data into a training set and a test set.
	\item[\textbf{Step 2:}] Perform forward stepwise regression on the training set to select a model that includes some subset of the $p=20$ covariates. We use the \texttt{R} function \texttt{step} with its default settings. 
	\item[\textbf{Step 3:}] Re-fit the selected model using the test set and report the standard confidence intervals for each selected coefficient.
\end{list}
We carry out this process using two different methods. 
\begin{list}{}{}
\item[\textbf{Sample splitting:}] Randomly generate a set $\mathrm{train} \subset \{1,\ldots,n\}$ with $| \mathrm{train} | = \epsilon n$. Let $\left\{ \left(\boldsymbol{Z}_i, X_i\right) : i \in \mathrm{train} \right\}$ be the training set and let $\left\{ \left(\boldsymbol{Z}_i, X_i\right) : i \in \{1,\ldots,n\} \setminus \mathrm{train} \right\}$  be the test set. 
\item[\textbf{Data thinning:}] Apply Algorithm~\ref{alg:datathin} to each $X_i$ for 	$i=1,\ldots,n$. Let $\left\{ \left(\bold{Z}_i, X_i^{(1)}\right) : i \in  \{1,\ldots,n\}  \right\} $ be the training set and let $\left\{ \left(\bold{Z}_i, X_i^{(2)}\right) : i \in  \{1,\ldots,n\}  \right\} $ be the test set.  	
\end{list}

We carry out each method using $\epsilon=0.2$ and $\epsilon=0.8$. For each method, we consider (i) \emph{detection}: the proportion of datasets for which $Z_3$ appears in the selected model, and (ii) \emph{power}: the proportion of datasets for which the confidence interval for $\beta_3$ does not include $0$, among those where $Z_3$ appeared in the selected model. These metrics focus on one of the important covariates, $Z_3$, but the results are similar for each of the important covariates (i.e. $Z_j$ for $j \in \{1,\ldots,5\}$).  The results are shown in the top row of Figure~\ref{figure_samplesplit}. As expected, data thinning and sample splitting achieve nearly identical results for these two metrics, with $\epsilon$ governing a tradeoff between detection and power. 

We then repeat the experiment, but we let $n=42$ and we let $\bold{Z}$ be a fixed matrix that contains a single row whose elements are drawn from $N(5,1)$ rather than $N(0,1)$. This corresponds to having a single high-leverage observation in the dataset that contains most of the information about $\beta$. 
We only consider $\epsilon \geq 0.5$, since in this setting where $n \approx 2 p$, using a smaller value of $\epsilon$ would complicate our ability to fit a linear regression model on the training set. As shown in the bottom row of Figure~\ref{figure_samplesplit}, data thinning outperforms sample splitting in this setting, because for any particular split, sample splitting either leaves very little information in the training set or very little in the test set. The power results confirm the insight from Remark~\ref{remark_rasines_info}. 

The general finding that data thinning outperforms sample splitting in this setting mirrors findings from  \cite{leiner2021data} and \cite{rasines2021splitting}, which is not surprising in light of Remark~\ref{remark:normal_equivalence}, which states that Gaussian data thinning proposal is equivalent (up to a simple rescaling) to the Gaussian randomization or fission proposals of these prior papers. Thus, the similar conclusions are not a coincidence, and the properties of data thinning seen in this setting can also be interpreted as properties of (Gaussian) data fission.  

\begin{figure}
\centering
\includegraphics[width=\textwidth]{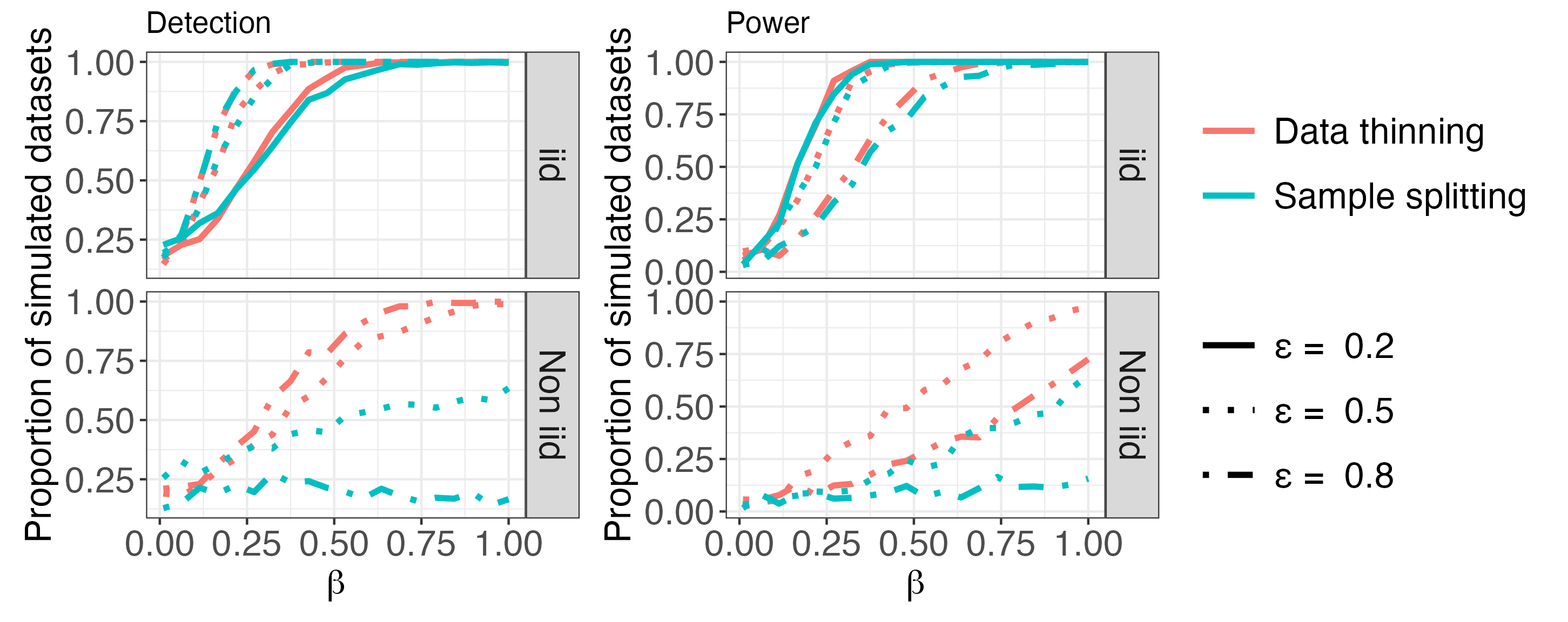}
\caption{Comparison of data thinning and sample splitting, using the detection and power metrics defined in Section~\ref{subsec:empirical_ss}. The top row shows the results of the large $n$ setting where the observations are independent and identically distributed (iid), and thus data thinning and sample splitting achieve nearly identical results across values of $\epsilon$. The bottom row shows the results in the small $n$ setting, in which the observations are not identically distributed (non iid) and the presence of a high leverage point causes data thinning to outperform sample splitting.}
\label{figure_samplesplit}	
\end{figure}

\section{Comparing data thinning and data fission} 
\label{appendix:leiner}

As mentioned in Section~\ref{sec1:intro}, the data fission proposal of \cite{leiner2021data} provides an alternate set of strategies to decompose a single realization $X$ into $\xo$ and $\xt$. In this section, we compare and contrast the two approaches. 

\subsection{Independent decompositions}
\label{appendix:leiner1}

\cite{leiner2021data} provide a strategy for obtaining independent $\xo$ and $\xt$ only in the case  where $X$ is Poisson or Gaussian. 

In the case of the Poisson distribution, the proposal of \cite{leiner2021data} coincides exactly with the proposal obtained from Algorithm~\ref{alg:datathin} in this paper. This proposal follows from a classical property of the Poisson distribution (see e.g. \citep{durrett2019probability}, Section 3.7.2), and has recently been applied in contexts related to that of this paper by \cite{oliveira2022coupled, sarkar2021separating, gerard2020data, chen2021estimating} and \cite{neufeld2022inference}.

In the case of the Gaussian distribution, the proposal of \cite{leiner2021data} has also been used by  \cite{tian2018selective}, \cite{oliveira2021unbiased}, and \cite{rasines2021splitting}, among others. It does not follow directly from Algorithm~\ref{alg:datathin} in this paper, since $X \neq \xo+\xt$. However, in Example~\ref{ex:leiner_normal}, we show the proposal of \cite{leiner2021data} is a simple rescaling of the proposal in this paper. 

\begin{example}[Comparison of two Gaussian decompositions]
Consider the task of splitting  the $\mathrm{N}(\mu, \sigma^2)$ distribution into two independent normally-distributed random variables, with $\sigma$ known.
The data thinning proposal is given in Table~\ref{tab:maintable}, and leads to $\xo \sim \mathrm{N}(\eps \mu, \eps \sigma^2)$ and $\xt \sim \mathrm{N}((1-\eps) \mu, (1-\eps) \sigma^2)$, where $\xo+\xt = X$. 

The data fission proposal is as follows: given a value of $\tau>0$, we draw $Z \sim \mathrm{N}(0, \sigma^2)$, and then let $\xo = X + \tau Z$  and  $\xt = X - \frac{1}{\tau} Z$.  Then,  $X' \sim \mathrm{N}(\mu, (1+\tau^2) \sigma^2)$  and  $X'' \sim \mathrm{N}(\mu, (1+\frac{1}{\tau^2}) \sigma^2)$, with $X' \perp\!\!\!\perp X''$. Under this decomposition, $X'+X'' \neq X$, but 
$$
\frac{1}{1+\tau^2} X' + \frac{\tau^2}{1+\tau^2} X'' = \frac{1}{1+\tau^2} (X + \tau Z) + \frac{\tau^2}{1+\tau^2} (X - \frac{1}{\tau} Z) = X.
$$
We can easily verify that, if we let $\eps = \frac{1}{1+\tau^2}$, then the random variables $\frac{1}{1+\tau^2} X'$ and $\frac{\tau^2}{1+\tau^2} X''$ obtained via data fission have the same distributions (both marginally and conditional on $X=x$) as $\xo$ and $\xt$ obtained via data thinning. For example, we can easily verify that $\frac{1}{1+\tau^2} X' \sim  \N\left( \frac{1}{1+\tau^2} \mu, \frac{1}{1+\tau^2} \sigma^2 \right) =  \N \left( \eps \mu, \eps \sigma^2 \right)$. Thus, the two decompositions are identical, up to a scaling of $\xo$ and $\xt$ by a (known) constant. 
\label{ex:leiner_normal}
\end{example}

The main idea of Example~\ref{ex:leiner_normal} extends to the decomposition of the multivariate normal given in Table~\ref{tab:maintable} of this paper, and the corresponding decomposition from \cite{leiner2021data}. 

\subsection{Non-independent decompositions}
\label{appendix:leiner2}

With the exception of the Gaussian and Poisson distributions, the decompositions of \cite{leiner2021data} do not yield $\xo$ and $\xt$ that are independent. Instead, the goal of \cite{leiner2021data} is to obtain a decomposition such that the distributions of $\xo$ and $\xt \mid \xo$ are tractable. While in principle we can fit a model to $\xo$ and validate it using the conditional distribution of $\xt \mid \xo$, we will see in this section that this can be difficult to carry out in practice. In particular, we note the following drawbacks of the non-independent decompositions of \cite{leiner2021data}. 

\begin{list}{}{}
\item[(1)] The distribution of $\xo$, and the conditional distribution of $\xt | \xo$, need not resemble the distribution of $X$. Thus, if the goal is to evaluate a potential model for $X$, it is not always clear what model to fit to $\xo$. We will illustrate this drawback in Example~\ref{ex:leiner-gam}.
\item[(2)] The parameters of interest are entangled in the conditional distribution of $\xt \mid \xo$. We will illustrate this issue in Example~\ref{ex:leiner_binom}.
\item[(3)]  The tuning parameter that governs the information trade-off between $\xo$ and $\xt$ can be hard to interpret.  For instance, in the case of the gamma decomposition in Example~\ref{ex:leiner-gam}, the tuning parameter is $B \in \{1,2,\ldots\}$, but in the case of the negative binomial distribution in Example~\ref{ex:leiner_binom}, it is $\eps \in (0,1)$. These both contrast with Example~\ref{ex:leiner_normal}, where the tuning parameter was $\tau > 0$. 
\item[(4)] The roles of $\xo$ and $\xt$ cannot be interchanged. For example, in the decomposition of the $\mathrm{Bernoulli}(\theta)$ distribution given in \cite{leiner2021data}, the distributions of $\xo$ and $\xt \mid \xo$ each contain information about $\theta$. However, the distribution of $\xo \mid \xt$ contains no information about $\theta$. Furthermore, while Remark 1 in \cite{leiner2021data} provides a strategy for obtaining multiple folds of training data for the data fission decompositions that are constructed using the ``conjugate prior" strategy, these folds are not marginally independent of one another (they are conditionally independent given $X$). Beyond these specific decompositions, \cite{leiner2021data} do not provide a clear strategy for extending their decompositions to the case of multiple folds. Thus, it is not clear in general how to use data fission decompositions to carry out cross validation. 
\end{list}
 
To illustrate point (1), we consider the gamma distribution. 

 \begin{example}[Gamma decomposition, data fission approach] \label{ex:leiner-gam}
Suppose $X \sim \mathrm{Gamma}(\alpha, \beta)$. For a tuning parameter $B \in \{1,2,\ldots \}$, \cite{leiner2021data} propose drawing $Z=(Z_1,\ldots,Z_B)$, where $Z_i \overset{\mathrm{ind.}}{\sim}  \mathrm{Poisson}(X)$, and thus each $Z_i$ marginally follows a $\mathrm{NegativeBinomial}(\alpha, 1/(\beta+1))$ distribution, and the $Z_i$ are independent conditional on $X$. Take $\xo=Z$, and $\xt = X$. Then, the conditional distribution of $\xt \mid \xo$ is $\mathrm{Gamma}(\alpha+\sum_{i=1}^B Z_i, \beta+B)$. 
\end{example}
This stands in notable contrast to Example~\ref{ex:gcs-gam-Kfold} in Section~\ref{sec1:intro}, in which data thinning provides independent (and gamma-distributed) random variables. Given that $\xo$ does not resemble $X$, it is not clear how to apply this decomposition in the setting of Example~\ref{subsubsec:gammasetup} from Section~\ref{sec:sim}, where we would like to apply a clustering algorithm to $\xo$ to estimate the true cluster structure of $X$. Unless $B=1$, $\xo$ and $X$ do not even have the same dimensions, which makes it difficult to know what type of clustering algorithm to apply or how to interpret the results. 

To illustrate point (2), we revisit Example~\ref{ap:Eval} to see a concrete example in which data thinning is straightforward but the proposal of \cite{leiner2021data}  is difficult to use in practice. 
 
\begin{example}[A comparison of negative binomial decompositions]
\label{ex:leiner_binom}
We observe \\ 
${X}_{ij} \sim  \mathrm{NegativeBinomial}\left( {r}_{ij} , {p}_{ij}\right)$ for $i=1,\ldots,n$ and $j=1,\ldots,p$. Our goal is to evaluate a function $\hat{\mu}(X)$ as an estimator for $\E[X]$, where $E[X]_{ij} = r_{ij}\frac{1-p_{ij}}{p_{ij}}$. 

From Table~\ref{tab:maintable}, data thinning requires ${r}_{ij}$ to be known, and yields ${X}^{(1)}_{ij} \sim  \mathrm{NegativeBinomial}\left(\eps {r}_{ij},{p}_{ij} \right)$ and ${X}^{(2)}_{ij} \sim  \mathrm{NegativeBinomial}\left((1-\eps) {r}_{ij}, {p}_{ij}\right)$, with ${X}^{(1)} \perp\!\!\!\perp {X}^{(2)}$, $\E[{X}^{(1)}] = \eps \E[{X}]$, and $\E[{X}^{(2)}] = (1-\eps) \E[{X}]$. Thus, as in Example~\ref{ap:Eval} and Example~\ref{ex:lossMSE}, $\hat{\mu}\left({X}^{(1)}\right)$ is an estimator for $\eps \E[{X}]$, and we can evaluate $\hat{\mu}({X}^{(1)})$ by computing the mean squared error between ${X}^{(2)}$ and $\frac{1-\eps}{\eps} \hat{\mu}\left({X}^{(1)}\right)$. 

For $\eps \in (0,1)$, the data fission proposal of \cite{leiner2021data} draws ${X}^{(1)}_{ij} \mid {X}_{ij} \sim \mathrm{Binomial}({X}_{ij},\eps)$ and sets ${X}^{(2)} = {X}-{X}^{(1)}$. Under this decomposition, \\
${X}^{(1)}_{ij} \sim \mathrm{NegativeBinomial}\left({r}_{ij}, \frac{{p}_{ij}}{{p}_{ij}+\eps(1-{p}_{ij})}\right)$, and so $\E[{X}^{(1)}]= \eps \E[{X}]$. Although marginally $\E[\xt] = (1-\epsilon) \E[X]$, 
as ${X}^{(1)}$ and ${X}^{(2)}$ are not independent, we cannot simply use mean squared error loss between ${X}^{(2)}$ and $\frac{1-\eps}{\eps} \hat{\mu}\left({X}^{(1)}\right)$ to evaluate the estimator. Instead, we must construct a loss function that evaluates $\hat{\mu}\left({X}^{(1)}\right)$ as an estimator of $\E[X]$ in the conditional distribution of $\xt \mid \xo$, which is given by
${X}^{(2)}_{ij} \mid {X}^{(1)}_{ij} \sim  \mathrm{NegativeBinomial}\left({r}_{ij}+{X}^{(1)}_{ij}, {p}_{ij} + \epsilon -{p}_{ij} \epsilon \right)$. As $E[{X}^{(2)}_{ij} \mid \xo_{ij}] = \left({r}_{ij} + {X}^{(1)}_{ij} \right) \left( \frac{1 - {p}_{ij} - \eps + \eps {p}_{ij}}{{p}_{ij} + \eps - \eps {p}_{ij}} \right)$ is not a simple function of $\E[X]$, this is not a straightforward task. At a minimum, it involves disentangling the roles of the parameters $r_{ij}$ and $p_{ij}$. 

Furthermore, while at first glance it might appear that an advantage of data fission over data thinning is that the former does not require knowledge of ${r}_{ij}$ to obtain ${X}_{ij}^{(1)}$ and ${X}_{ij}^{(2)}$, evaluating an estimator of $\E[X]$ using this conditional distribution will require knowing or accurately estimating the nuisance parameters ${r}_{ij}$ due to the aforementioned parameter entanglement issue. 
\end{example}

Similar issues arise for other decompositions given in \cite{leiner2021data}. For example, the data fission decomposition of the binomial distribution yields a complicated unnamed distribution for $\xt \mid \xo$, which would be very difficult to use in the context of Example~\ref{subsubsec:binsetup}.

\section{Simulation Study}
\label{sec:sim}

\subsection{Simulation setup}
\label{subsec:setup}

In this section, we focus on the application of data thinning to cross-validation in two settings. We contrast its performance to naive approaches that use the same data to both fit and validate the models. Specifically, we consider Examples~\ref{subsubsec:binsetup} and \ref{subsubsec:gammasetup}. In each of these examples, sample splitting is not a viable option, as the parameters of interest have dimensions equal to the number of observations. Furthermore, as pointed out in Section~\ref{appendix:leiner2}, applying data fission in these settings is not straightforward. 

\begin{example}[Choosing the number of principal components on binomial data]
\label{subsubsec:binsetup}
We generate data with $n=250$ observations and $d=100$ dimensions. Specifically, for $i=1,\dots,n$ and $j=1,\dots,d$, we generate $X_{ij} \overset{\mathrm{ind.}}{\sim}\text{Binomial}(r,p_{ij})$ where $r=100$ and $p$ is an unknown $n \times d$ matrix of probabilities. We construct $\text{logit}(p)$ as a rank-$K^*=10$ matrix with singular values $5,6,\dots,14$. Additional details are provided in Section~\ref{sec:simpar}. Our goal is to estimate $K^*$. 
\end{example}

\begin{example}[Choosing the number of clusters on gamma data]
\label{subsubsec:gammasetup}
We generate datasets $X \in \mathbb{R}^{n \times d}$ such that there are 100 observations in each of $K^*$ clusters, for a total of $n=100K^*$ observations. Our objective is to estimate $K^*$. We let $X_{ij} \overset{\mathrm{ind.}}{\sim}  \text{Gamma}(\lambda, \theta_{c_i,j})$, for $i=1,\ldots, n$ and $j=1,\ldots,d$, where  $c_i \in \{1,2,\dots,K^*\}$ indexes the true cluster membership of the $i$th observation. The shape parameter $\lambda$ is a known constant common across all clusters and all dimensions, whereas the rate parameter $\theta$ is an unknown $K^* \times d$ matrix such that each cluster has its own $d$-dimensional rate parameter. We generate data under two regimes: (1) a small $d$, small $K^*$ regime in which $d=2$ and $K^*=4$, and (2) a large $d$, large $K^*$ regime in which $d=100$ and $K^*=10$. The values of $\lambda$ and $\theta$ are provided in Section~\ref{sec:simpar}. A sample ``small $d$, small $K^*$" dataset is presented in Figure \ref{fig:2dgammasim}, alongside the output of data thinning with $\epsilon=0.5$.  
\end{example}

\begin{figure}
\centering
\includegraphics[width=0.8\linewidth]{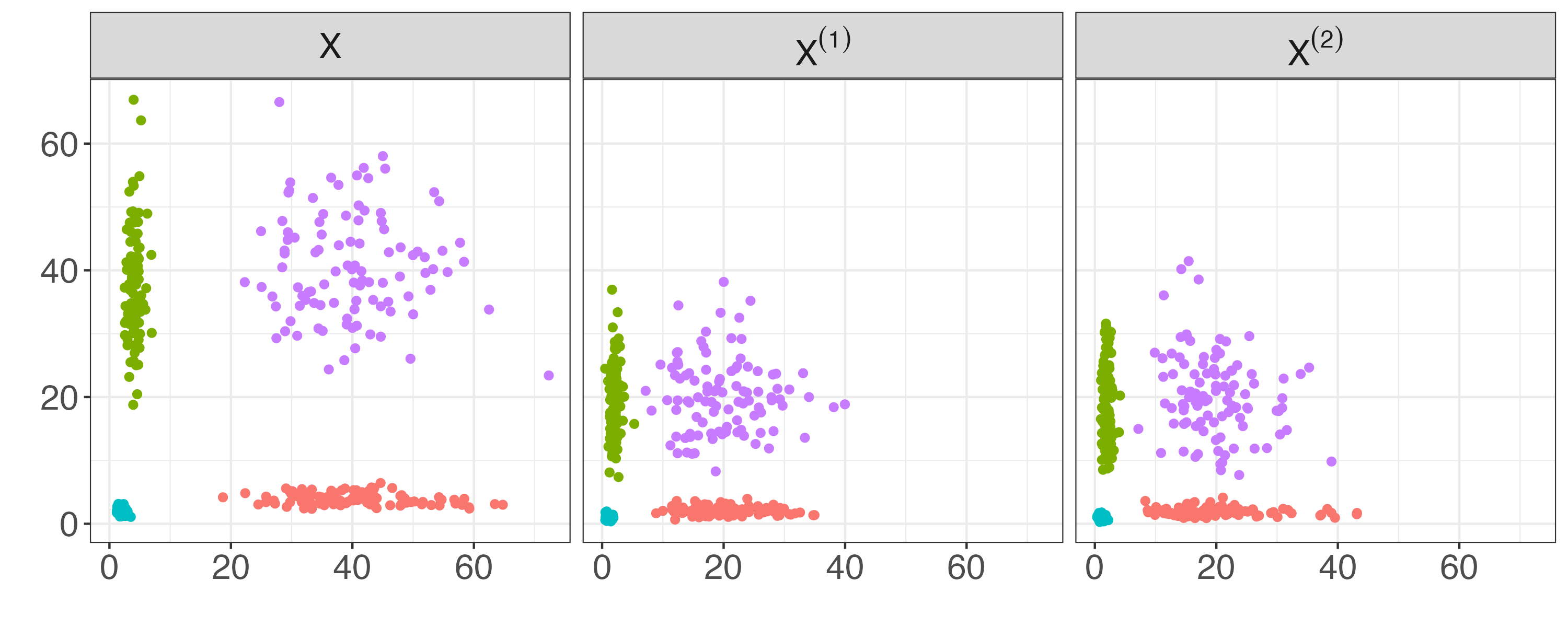}
\caption{\textit{Left}: A simulated dataset in the $d=2$, $K^*=4$ setting described in Example~\ref{subsubsec:gammasetup}. \textit{Center/Right}: The result of data thinning with $\epsilon=0.5$. }
\label{fig:2dgammasim}
\end{figure}

\subsection{Methods}
\label{subsec:methods}

We use Algorithm~\ref{alg:binpcaalg} to select the number of principal components in binomial data, as in Example~\ref{subsubsec:binsetup}, using data thinning. 

\begin{algorithm}[H]
\caption{Evaluating binomial principal components with negative log-likelihood loss}
\label{alg:binpcaalg}
\Input{A positive integer $K$, a matrix $X \in \mathbb{Z}_{[0,r]}^{n \times d}$, where $X_{ij} \overset{\textup{ind.}}{\sim}  \textup{Binomial}(r, p_{ij})$, and positive scalars $\epsilon^{\textup{(train)}}$ and $\epsilon^{\textup{(test)}} =1-\epsilon^{\textup{(train)}}$ such that  $\epsilon^{\textup{(train)}}r, \epsilon^{\textup{(test)}}r \in \mathbb{Z}_{> 0}$. } 
Apply data thinning to $X$ to obtain $\xtr$ and $\xte$, where $\xtr_{ij} \overset{\textup{ind.}}{\sim} \textup{Binomial}\left(\epsilon^{\textup{(train)}}r, p_{ij}\right)$ and  $\xte_{ij} \overset{\textup{ind.}}{\sim} \textup{Binomial}\left(\epsilon^{\textup{(test)}}r, p_{ij}\right)$. \\
Compute the singular value decomposition of the log-odds of $\xtr$, 
	$$
	\hat U \hat D \hat V^T = \textup{logit}\left\{(\xtr+0.001)/(\epsilon^{\textup{(train)}}r+0.002)\right\}.
	$$
	Pseudo-counts prevent taking the logit of 0 or 1. \\
Construct the rank-$K$ approximation of $\xtr$, $p^{(K)} := \textup{expit}\left(\hat U_{1:K} \hat D_{1:K} \hat V_{1:K}\right)$. \\
Compute the negative log-likelihood loss on $\xte$,
$
 -\sum_{i=1}^n \sum_{j=1}^d \log f\left(\xte_{ij} \Big\vert \epsilon^{\textup{(test)}}r, p^{(K)}_{ij}\right), 
$
where $f(\cdot \mid r,p)$ is the density function for the $\mathrm{Binomial}(r,p)$ distribution. 	
\end{algorithm}

We use Algorithm~\ref{alg:gammaclustalg} to select the number of clusters in gamma data, as in Example~\ref{subsubsec:gammasetup}, using data thinning.

\begin{algorithm}[H]
\caption{Evaluating gamma clusters with negative log-likelihood loss}	
\label{alg:gammaclustalg} 
\Input{A positive integer $K$, and $X \in \mathbb{R}_{> 0}^{n \times d}$ where $X_{ij} \overset{\textup{ind.}}{\sim} \textup{Gamma}\left(\lambda, \theta_{c_i,j}\right)$. Here, $\theta \in (0,\infty)^{K^* \times d}$ where $\theta_{c_i,j}$ is the true but unknown rate parameter for the $c_i$th cluster in the $j$th dimension, $c_i \in \{1,2,\dots,K^*\}$, and $\lambda$ is the known shape parameter. Also, positive scalars $\epsilon^{\textup{(train)}}$ and $\epsilon^{\textup{(test)}} =1-\epsilon^{\textup{(train)}}$. } 
Apply data thinning to $X$ to obtain $\xtr$ and $\xte$, where $\xtr_{ij} \overset{\textup{ind.}}{\sim} \textup{Gamma}\left(\epsilon^{\textup{(train)}}\lambda, \theta_{c_i,j}\right)$ and $\xte_{ij} \overset{\textup{ind.}}{\sim} \textup{Gamma}\left(\epsilon^{\textup{(test)}}\lambda, \theta_{c_i,j}\right)$. \\
 Run $K$-means on $\xtr$ to estimate $K$ clusters. Denote the cluster assignment of the $i$th observation as $\hat c_i$. \\
 Within each cluster, estimate the parameters using $\xtr$ \citep{gammaest,gammaest2}. Let $\hat\lambda^{(K)}$ and $\hat\theta^{(K)}$ denote the $K \times d$ estimated parameter matrices. \\
Compute the loss on $\xte$ as
	$
	-\sum_{i=1}^n \sum_{j=1}^d  \log f\left(\xte_{ij} \Big\vert \hat\lambda^{(K)}_{\hat c_i,j} \epsilon^{\textup{(test)}}/\epsilon^{\textup{(train)}} , \hat\theta^{(K)}_{\hat c_i,j}\right),
	$
where $f(\cdot \mid \lambda, \theta)$ is the density function for the  $\mathrm{Gamma}(\lambda,\theta)$ distribution. 
\end{algorithm}

We apply Algorithms \ref{alg:binpcaalg} and \ref{alg:gammaclustalg} in three different ways.
First, we apply them without modification, with $\epsilon^{\mathrm{(train)}} = 0.5$ and $\epsilon^{\mathrm{(train)}} = 0.8$.
Next, we slightly modify these algorithms by replacing step 1 with multi-fold thinning (Algorithm \ref{alg:datathin_manyfolds}) with $M=5$ and  $\epsilon_1=\dots=\epsilon_M=0.2$. For $m=1,\ldots,M$, we then perform steps 2--4 using $\xtr = X - X^{(m)}$, $\epsilon^{\textup{(train)}}=(M-1)/M$ and $\xte = X^{(m)}$, $\epsilon^{\textup{(test)}}=1/M$. We then average the loss functions obtained across the $M$ applications of step 4. Finally, we consider a naive method that re-uses data, by skipping step 1, and simply taking $\xtr=\xte=X$ in steps 2--4 and $\epsilon^{\textup{(train)}} = \epsilon^{\textup{(test)}}=1$ in step 4.

Our goal is to select the value of $K$ that minimizes the loss function. Because data thinning produces independent training and test sets, we expect that the data thinning approaches will produce U-shaped loss function curves, as a function of $K$. By contrast, in the naive approach, the full data $X$ is used to fit the model and to compute the loss functions in Algorithms~\ref{alg:binpcaalg} and ~\ref{alg:gammaclustalg}, resulting in monotonically decreasing loss curves, as a function of $K$. 

Other loss functions can be used in lieu of the negative log-likelihood loss in Algorithms \ref{alg:binpcaalg} and \ref{alg:gammaclustalg}. In Appendix~\ref{sec:simMSE}, we extend Algorithms \ref{alg:binpcaalg} and \ref{alg:gammaclustalg} to the case of mean squared error loss, and show similar results. 


\subsection{Results}

Figure~\ref{fig:role_NLL} displays the loss function for all three simulation settings as a function of $K$; results have been averaged over $2,000$ simulated datasets and rescaled to the $[0,1]$ interval for ease of comparison. The values of $K$ with the lowest average loss function are circled on the plots. As expected, the data thinning approaches in Figure \ref{fig:role_NLL} exhibit sharp minimum values, as opposed to the monotonically decreasing 
curves produced by the naive method. The data thinning approaches correctly select the true value of $K=K^*$ in all three settings, except for data thinning with $\epsilon^\mathrm{(train)}=0.5$ in the binomial principal components setting. In that case, the low value of $\epsilon^\mathrm{(train)}$ allocates too much information to the test set, resulting in inadequate signal from the weakest principal components in the training set. Selecting a larger value of $\epsilon^\mathrm{(train)}$ remedies this issue, as seen with $\epsilon^\mathrm{(train)}=0.8$.

\begin{figure}
\centering
\includegraphics[width=0.8\textwidth]{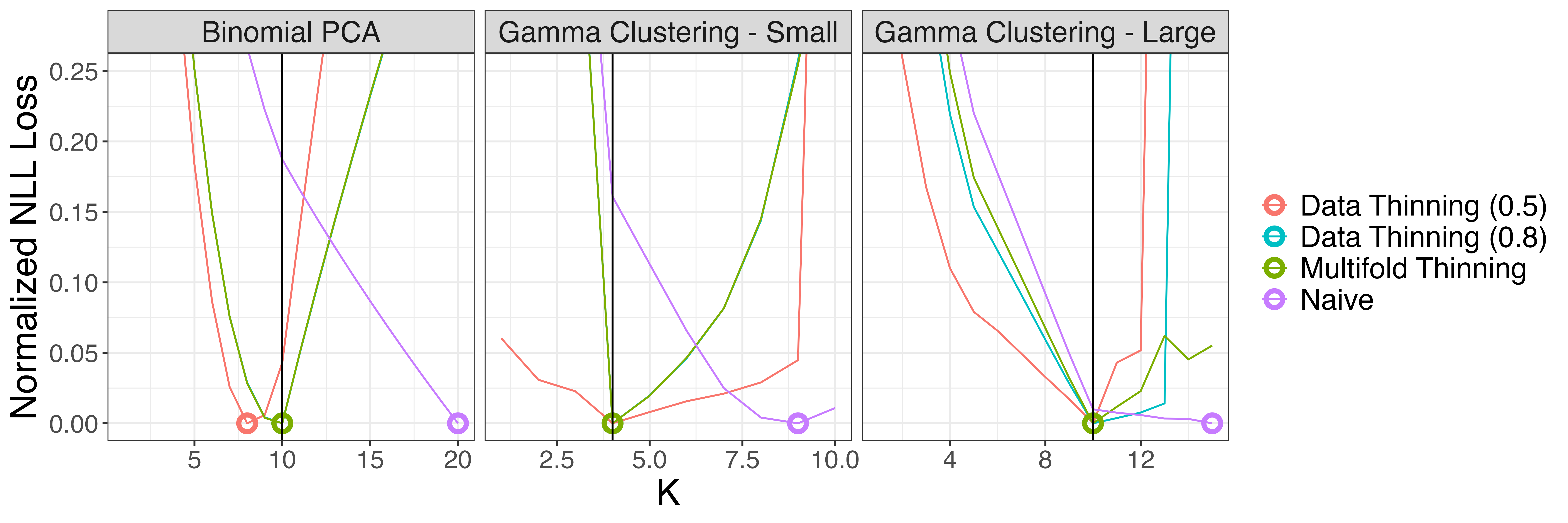}
\caption{The negative log-likelihood loss averaged over 2,000 simulated data sets, as a function of $K$, for the naive method (purple), data thinning with $\epsilon^\mathrm{(train)}=0.5$ (red), data thinning with $\epsilon^\mathrm{(train)} = 0.8$ (blue), and multifold thinning with $M=5$ folds (green). Each curve has been rescaled to take on values between $0$ and $1$, for ease of comparison. The minimum loss values for each method are circled, and $K^*$ is indicated by the vertical black line.}
\label{fig:role_NLL}
\end{figure}

We further investigate the role of $\epsilon^\mathrm{(train)}$ by repeating the simulation study using different values of $\epsilon^\mathrm{(train)}$ for single-fold data thinning. In Figure \ref{fig:role_eps}, we plot the proportion of simulations that select the correct value of $K^*$ (i.e. the proportion of simulations in which the loss function is minimized at $K=K^*$) in each of the three settings, as a function of $\epsilon^\mathrm{(train)}$. We find that in the gamma clustering simulations, lower values of $\epsilon^\mathrm{(train)}$ are adequate. However, settings with weaker signal, such as the binomial principal components example, require larger values of $\epsilon^\mathrm{(train)}$ to identify the true latent structure. In all settings, as $\epsilon^\mathrm{(train)}$ approaches $1$, performance begins to decay. This is a consequence of inadequate information remaining in the test set under large values of $\epsilon^\mathrm{(train)}$, and is consistent with the discussion of Section \ref{subsec:eps}. These findings suggest that in practice, the optimal value of $\epsilon^\mathrm{(train)}$ is context-dependent. 

\begin{figure}
\centering
\includegraphics[width=0.8\textwidth]{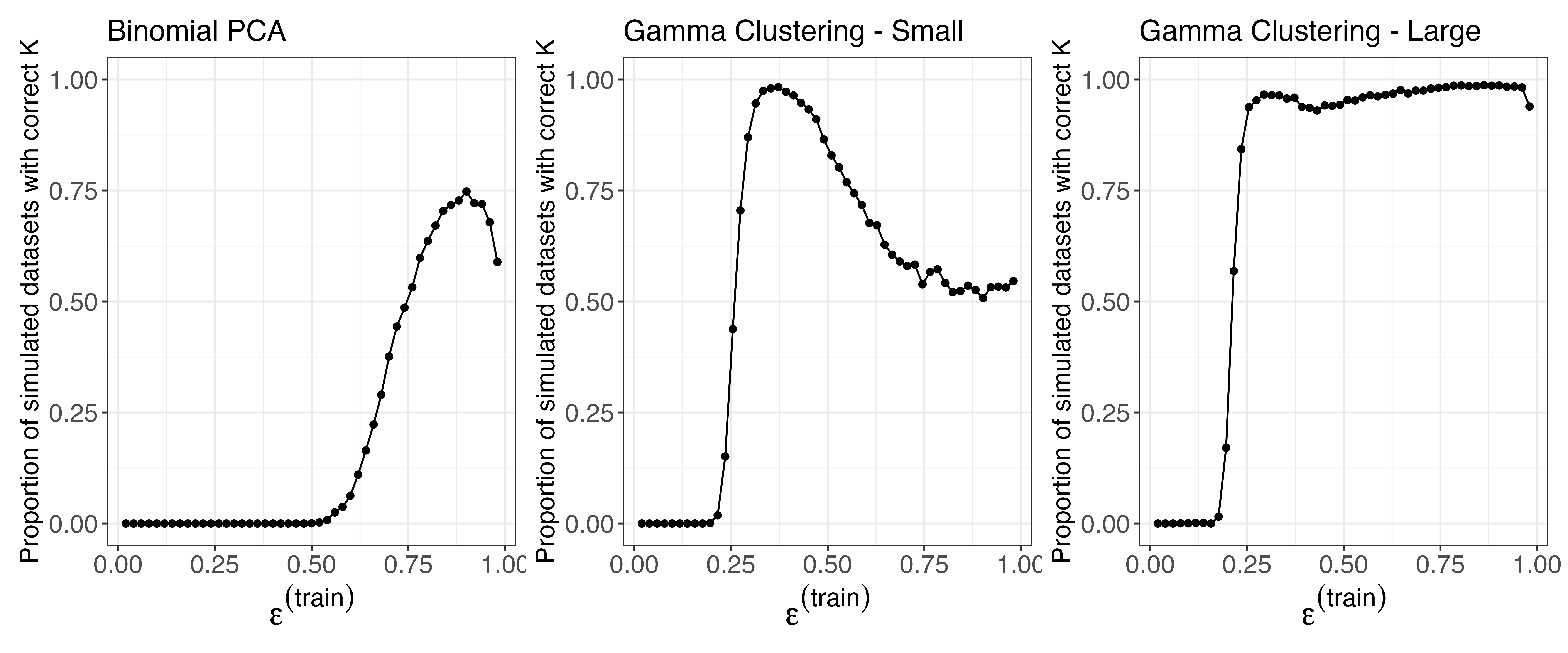}
\caption{The proportion of simulations for which data thinning selects the true value of $K^*$ with the negative log-likelihood loss, as a function of $\epsilon^\mathrm{(train)}$, for the simulation study described in Section \ref{subsec:setup}. The optimal value of $\epsilon^\mathrm{(train)}$ depends on the problem at hand.}	
\label{fig:role_eps}
\end{figure}

Finally, we examine the benefits of multifold data thinning over single-fold data thinning. Figure \ref{fig:role_folds} displays histograms of the number of simulations that select each value of $K$. Here we only include data thinning with $\epsilon^\mathrm{(train)}=0.8$ and multifold thinning with $M=5$, so that both methods use the same allocation of information between training and test sets. We see that multifold thinning generally selects the correct value of $K$ more often than single-fold data thinning, mirroring the improvement of $M$-fold cross-validation using sample splitting over single-fold sample splitting in supervised settings. However, in the large gamma setting, the signal is strong enough that multifold thinning does not provide a benefit over single-fold thinning.

\begin{figure}
\centering
\includegraphics[width=0.8\textwidth]{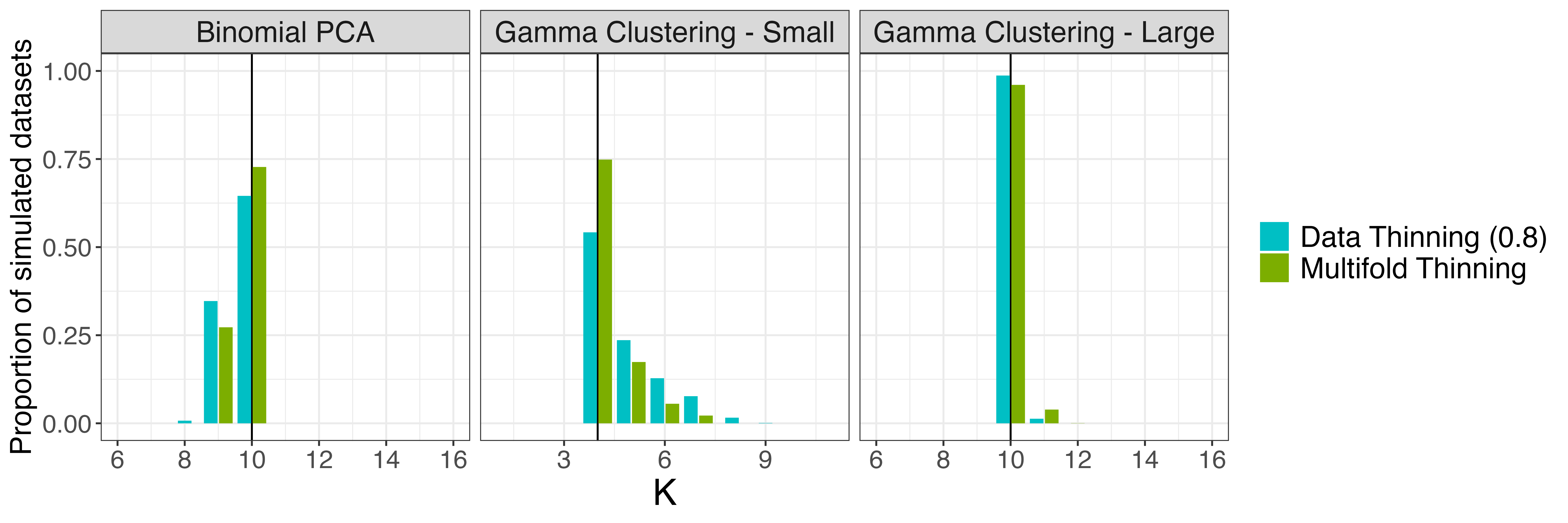}
\caption{The proportion of simulated data sets in which each candidate value of $K$ is selected, with the negative log-likelihood loss, under data thinning with $\epsilon^\mathrm{(train)}=0.8$ (blue) and multifold thinning with $M=5$ (green), for each of the simulation settings described in Section \ref{subsec:setup}. The true value of $K^*$ is indicated by the vertical black line. Multifold thinning tends to select the true value of $K$ more often than single-fold thinning.}	
\label{fig:role_folds}
\end{figure}

\section{Selecting the number of principal components in gene expression data}
\label{sec:data}

In this section, we revisit an analysis of a dataset from a single-cell RNA sequencing experiment conducted on a set of peripheral blood mononuclear cells. The dataset is freely available from 10X Genomics, and was previously analyzed in the ``Guided Clustering Tutorial" vignette  \citep{seuratvignette} for the popular \texttt{R} package \texttt{Seurat}  
\citep{seurat1, seurat2, seurat4}.

The dataset $X$ is a 
sparse matrix of non-negative integers, representing counts from  $32,738$ genes in each of $2,700$ cells. We consider applying principal components analysis to learn a low-dimensional representation of the data. In the \texttt{Seurat} vignette, filtering, normalization, log-transformation, feature selection, centering, and scaling are applied to the data, yielding a transformed matrix $\tilde{Y} \in \mathbb{R}^{2638 \times 2000}$. Details are provided in Section~\ref{appendix:seurat}. Finally, the singular value decomposition of $\tilde{Y}$ is computed, such that $\tilde{Y} = UDV^T$. Here we let $U_k$ represent the $k$th column of the matrix $U$, and let $U_{1:K} D_{1:K} V_{1:K}^T$ represent the rank-$K$ approximation of $\tilde{Y}$. 

Our goal is to select the number of dimensions to use in this low-rank approximation. In the \texttt{Seurat} vignette, the authors rely on heuristic solutions such as looking for an elbow in the plot of the standard deviation of $U_K D_K$ as a function of $K$; see Figure~\ref{fig:Seuratplot}(a) \citep{james2013introduction}. 
Based on the elbow plot, the authors suggest retaining around $7$ principal components. Other heuristic approaches suggest as many as $12$ principal components.

Before introducing the data thinning solution, we introduce a squared-error based formulation that is mathematically equivalent to the traditional elbow plot (see Section~\ref{appendix:seurat}), but will facilitate a direct comparison with data thinning. For $K=1,\ldots,20$, we compute the sum of squared errors between the matrix $\tilde{Y}$ and its rank-$K$ approximation:  
$$
\left\| \tilde{Y} - U_{1:K} D_{1:K} V_{1:K}^T\right\|_F^2.
$$
Because the low-rank approximation $U_{1:K} D_{1:K} V_{1:K}^T$ is computed using $\tilde{Y}$, this loss function monotonically decreases with $K$. A heuristic solution for deciding how many principal components to retain involves looking for the point in which the slope of the curve in Figure~\ref{fig:Seuratplot}(b) begins to flatten. While this appears to happen around 5--7 principal components, which is consistent with the finding from Figure~\ref{fig:Seuratplot}(a), the exact number of principal components to retain is still unclear. We now 
show that data thinning provides a principled approach for estimating the number of principal components. 

Single-cell RNA-sequencing data are often modeled as independent Poisson random variables \citep{wang2018gene, sarkar2021separating}. Thus, we assume that $X_{ij} \sim \mathrm{Poisson}(\Lambda_{ij})$. Starting with the raw data matrix $X \in \mathbb{Z}_{\geq 0}^{2700 \times 32738}$, we perform Poisson data thinning with $\epsilon=0.5$ to obtain a training set $X^{(1)}$ and a test set $X^{(2)}$, which are independent if the Poisson assumption holds. Furthermore, as $\eps = 0.5$, they are identically distributed. We then carry out the data processing described in Section~\ref{appendix:seurat} on $\xo$ to obtain $\tilde{Y}^{(1)} \in \mathbb{R}^{2638 \times 2000}$. We obtain $\tilde{Y}^{(2)}\in \mathbb{R}^{2638 \times 2000}$ by applying the same data processing steps to $\xt$, but retaining only the features that were selected on $\xo$, so that the rows and columns of $\tilde{Y}^{(1)}$ and $\tilde{Y}^{(2)}$ correspond to the same genes and cells. Details are in Section~\ref{appendix:seurat}. 

We compute the singular value decomposition on the training set, $\tilde{Y}^{(1)} = U^{(1)} D^{(1)} (V^{(1)})^T$. For a range of values of $K$, we then compute the sum of squared errors between $\tilde{Y}^{(2)}$ and $U_{1:K}^{(1)} D_{1:K}^{(1)} (V_{1:K}^{(1)})^T$: 
\begin{equation}
\label{eq:seurat_datathin_sse}
\left\| \tilde{Y}^{(2)} - U_{1:K}^{(1)} D_{1:K}^{(1)} (V_{1:K}^{(1)})^T\right\|_F^2.
\end{equation}
The results are shown in Figure~\ref{fig:Seuratplot}(c). As we are not computing and evaluating the singular value decomposition using the same data, the plot of $K$ vs. the loss function is not monotonically decreasing in $K$. Instead, it reaches a clear minimum at $K=7$, suggesting that the rank-$7$ approximation provides the best fit to the observed data. Thus, data thinning provides a simple and non-heuristic way to select the number of principal components. 

\begin{figure}
\centering
\includegraphics[width=\textwidth]{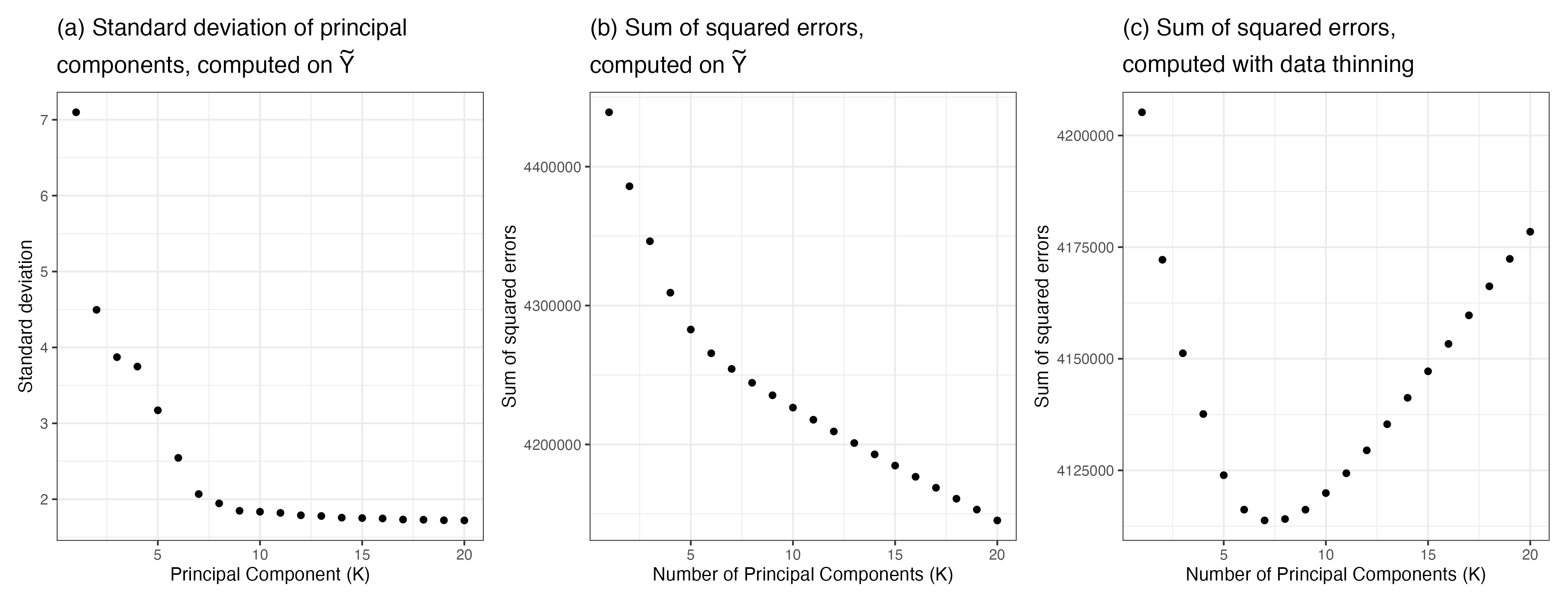}
\caption{
Results for the data analysis in Section~\ref{sec:data}.
\emph{(a)} An ``elbow plot" of the standard deviation of the principal components, which reproduces the plot given in the Seurat guided clustering tutorial. 
\emph{(b)} Due to the relationship between the sum of squared errors and the standard deviation of the principal components (see Section~\ref{appendix:seurat}), looking for an elbow in (a) is equivalent to looking for an elbow in (b). 
\emph{(c)} The data-thinning version of (b), which shows a clear minimum in the loss function at 7 principal components.}
\label{fig:Seuratplot}	
\end{figure}

\begin{remark}[Choice of $\eps$]
If we had chosen $\eps \neq 0.5$, then while $\E[\xt] = \frac{1-\eps}{\eps} \E[\xo]$, the relationship between $\E[\tilde{Y}^{(1)}]$ and $\E[\tilde{Y}^{(2)}]$ would depend on the details of the data processing described in Section~\ref{ap:Eval}, and the loss function in \eqref{eq:seurat_datathin_sse} would need to be modified accordingly. 
\end{remark}

\begin{remark}[Overdispersion]
While we used a Poisson model for scRNA-seq data, there is evidence that a negative binomial model may be preferable in some settings. It is possible to modify the analysis in this section using negative binomial data thinning, as in \cite{neufeld2023negative}.  
\end{remark}

\section{Discussion}
\label{sec:discussion}

We have proposed data thinning, a new way to decompose a single observation into two or more independent observations that sum to yield the original. This proposal applies to a very broad class of distributions. Furthermore, we have compared data thinning to sample splitting, and have shown that the former is applicable in many cases that the latter is not, and may be preferable even when both are applicable. 

In Section~\ref{subsec:nuissance}, we considered the impact of using the incorrect value of a nuisance parameter when performing data thinning, but we did not consider what happens when the nuisance parameter is estimated using the data itself. In future work, we will consider the theoretical and empirical implications of performing data thinning with an estimated nuisance parameter. Furthermore, we focused on convolution-closed distributions and thus used additive decompositions where $X = \xo+\xt$. For distributions with bounded support, such as the beta distribution, non-additive decompositions are needed. We leave such decompositions to future work. 

An R package implementing data thinning and scripts to reproduce the results in this paper are available at {https://anna-neufeld.github.io/datathin/}.

\section{Acknowledgements}

This material is based upon research supported in part by the Office of Naval Research (award number N000142312589), the National Science Foundation (award Number 2322920), the Simons Foundation (Simons Investigator Award in Mathematical Modeling of Living Systems), the National Institutes of Health (R01 EB026908, R01 GM123993, and R01 DA047869), and the Keck Foundation. Lucy Gao and Ameer Dharamshi were supported in part by the Natural Sciences and Engineering Research Council of Canada.  

\appendix

\section{Proofs from Section~\ref{sec:proposal}}

\subsection{Proof of Theorem~\ref{theorem:datathin}}
\label{appendix:mainproof}

This proof is due to \cite{joe1996time} and \cite{jorgensen1998stationary}, but has been adapted to fit our notation. 

Let $x$, our observed data, be a realization of a random variable $X \sim F_\lambda$. Let $\eps \in (0,1)$ be chosen such that $\eps \lambda$ and $(1-\eps) \lambda$ are in the parameter space $\Lambda$. 
We draw $\xo \mid X=x \sim G_{\eps \lambda, (1-\eps) \lambda, x}$, where this notation was defined in Section~\ref{subsec:convclosed}, and let 
$\xt = X - \xo$. 

Separately, let $X' \sim F_{\epsilon \lambda}$ and $X'' \sim F_{(1-\epsilon) \lambda}$ be independent, and let $Y = X' + X''$. As $F_\lambda$ is a convolution-closed distribution, it follows that $Y \sim F_\lambda$ and thus we know that $Y$ has the same marginal distribution as $X$. 

We first argue that the joint distribution of $(\xo, X)$ is the same as the joint distribution of $(X', Y)$. The conditional distribution of $\xo \mid X$ is the same as the conditional distribution of $X' \mid Y$ by definition of the distribution $G_{\eps \lambda, (1-\eps) \lambda, x}$. Furthermore, we have already noted that the marginal distributions of $X$ and $Y$ are the same. Thus, the joint distribution of $(\xo, X)$ is the same as the joint distribution of $(X', Y)$.

As $\xt$ is deterministic given $\xo$ and $X$, the joint distribution of $(\xo, \xt)$ is the same as the joint distribution of $(\xo, X)$. Similarly, the joint distribution of $(X',Y)$ is the same as the joint distribution of $(X', X'')$. Thus, the joint distribution of $(\xo,\xt)$ is the same as the joint distribution of $(X', X'')$. As the joint distribution of $X'$ and $X''$ is known to be the product of independent distributions $F_{\epsilon \lambda}$ and 
$F_{(1-\epsilon) \lambda}$, this completes the proof of parts (i) and (ii) of Theorem~\ref{theorem:datathin}.

The final statement of Theorem~\ref{theorem:datathin} follows directly from Definition~\ref{def:linExp}.

\subsection{Proof of Proposition~\ref{prop:normalnuissance}}
\label{appendix:nuissanceproof}

To prove (i), note that since $\xo \mid X=x$ is normally distributed and $X$ is normally distributed, a well-known property of the normal distribution tells us that the marginal distribution of $\xo$ is normal. We then use the law of total expectation and the law of total variance to compute its mean and variance.
\begin{align*}
\E[\xo] &= \E[\E[\xo \mid X]] = \E[\eps X] = \eps \mu \\
\Var(\xo) &= \Var\left(\E\left[\xo \mid X\right]\right) + \E\left[ \Var \left( \xo \mid X \right) \right]  \\
&= \Var\left( \eps X \right) + \E \left( \eps (1-\eps) \tilde{\sigma}^2 \right) \\
&= \epsilon^2 \sigma^2 + \eps(1-\eps)\tilde{\sigma}^2.
\end{align*}
To prove (ii), note that the difference between two normally distributed variables ($X$ and $\xo$) is normal. Then note that
 \begin{align*}
\E[\xt] &= \E[X]-\E[\xo] = \mu - \eps \mu = (1-\eps) \mu.\\
\Var(\xt) &=  \Var\left(\E\left[\xt \mid X\right]\right) + \E\left[ \Var \left( \xt \mid X \right) \right] \\
&=  \Var\left(\E\left[X - \xo \mid X\right]\right) + \E\left[ \Var \left( X - \xo \mid X \right) \right] \\
&=  \Var\left((1-\eps) X \right) + \E\left[ \Var \left( \xo \mid X \right) \right] \\
&=  (1-\eps)^2 \sigma^2  + \eps (1-\eps) \tilde{\sigma}^2,
\end{align*}
which completes the proof of (ii). Finally, to prove (iii), note that
\begin{align*}
2  \Cov(\xo, \xt) &= \Var(X) - \Var(\xo) - \Var(\xt) \\
 &= 	\sigma^2 - \epsilon^2 \sigma^2 - \eps(1-\eps)\tilde{\sigma}^2 - (1-\eps)^2 \sigma^2 - \eps(1-\eps)\tilde{\sigma}^2 \\
&= 2 \eps(1 - \epsilon) \left(\sigma^2 - \tilde{\sigma}^2 \right).
\end{align*}

\subsection{Proof of Proposition~\ref{prop:nbnuissance}}
\label{appendix:nuissanceproof_nb}

Recall that if $A \sim \mathrm{BetaBinomial}\left(r, \alpha, \beta \right)$, then $\E[A] = \frac{r\alpha}{\alpha+\beta}$ and $\Var(A) = \frac{r\alpha\beta(\alpha+\beta+r)}{(\alpha+\beta)^2(\alpha+\beta+1)}$. Then we can derive the marginal variance of $\xo$ using the law of total variance and the fact that $\xo \mid X \sim \mathrm{BetaBinomial}(X, \eps \tilde{r}, (1-\eps) \tilde{r})$. 
\begin{align*}
\Var(\xo) &= \E[\Var(\xo \mid X)] + \Var(\E[\xo \mid X]) \\
&= \E \left[ \frac{X \epsilon (1-\epsilon)  (\tilde{r}+X)}{(\tilde{r}+1)}\right] + \Var \left( \eps X \right)  \\
&= \frac{ \epsilon (1-\epsilon)}{\tilde{r}+1} \left( \tilde{r}\E\left[ X \right] + \E\left[ X^2 \right]\right) + \epsilon^2 \Var\left( X \right) \\
&= \frac{ \epsilon (1-\epsilon)}{\tilde{r}+1} \left( \tilde{r} \E\left[ X \right] + \Var(X) + E[X]^2 \right) + \epsilon^2 \Var\left( X \right). 
\end{align*}
Next note that $\Var(\xt \mid X) = \Var(X - \xo \mid X) =\Var(\xo \mid X)$ and that $\E[\xt \mid X] = (1-\eps) X$. Thus, we arrive at:
\begin{align*}
\Var(\xt) &= \frac{ \epsilon (1-\epsilon)}{\tilde{r}+1} \left( \tilde{r} \E\left[ X \right] + \Var(X) + E[X]^2 \right) + (1-\epsilon)^2 \Var\left( X \right). 
\end{align*}
To derive the covariance, note that:
\begin{align*}
2 \Cov(\xo, \xt) &= \Var(X) - \Var(\xo) - \Var(\xt) \\
&= \Var(X) - 2 \frac{ \epsilon (1-\epsilon)}{\tilde{r}+1} \left( \tilde{r} \E\left[ X \right] + \Var(X) +  E[X]^2 \right) - \left(\epsilon^2 + (1 -  \epsilon)^2 \right) \Var\left( X \right)  \\
&= - 2 \frac{ \epsilon (1-\epsilon)}{\tilde{r}+1} \left( \tilde{r} r \frac{1-p}{p} + r \frac{1-p}{p^2} + r^2  \frac{(1-p)^2}{p^2}\right) +  2 \epsilon (1 - \epsilon) r \frac{1-p}{p^2}  \\
&= 2 \epsilon (1 - \epsilon) r \frac{(1-p)^2}{p^2} \left(1 - \frac{r+1}{\tilde{r}+1} \right).
 \end{align*}

\subsection{Proof of Proposition~\ref{prop:gamma_nuissance}}
\label{appendix:nuissanceproof_gamma}

The proof structure is identical to those of Proposition~\ref{prop:normalnuissance} and Proposition~\ref{prop:nbnuissance}. 
We start by recalling that if $A \sim \mathrm{Gamma}(\alpha,\beta)$, then $\E[X] = \frac{\alpha}{\beta}$ and $\Var(X) = \frac{\alpha}{\beta^2}$. Then:
\begin{align*}
\Var(\xo) &= E\left[\Var\left(\xo \mid X\right)\right] + \Var\left( \E\left[ \xo \mid X\right]\right)	 \\
&= E\left[\Var\left(X Z \mid X\right)\right] + \Var\left( \E\left[ X Z \mid X\right]\right) \\
&= E\left[X^2 \Var\left(Z \right)\right] + \Var\left( X \E\left[ Z \right]\right) \\
&= E\left[X^2 \left( \frac{\eps(1-\eps)}{\tilde{\alpha}+1}\right)\right] + \Var\left( X \eps \right)  \\
&=  \frac{\eps(1-\eps)}{\tilde{\alpha}+1} \left( \Var(X) + \E[X]^2 \right) + \epsilon^2 \Var\left( X \right). 
\end{align*}
Similarly, we note that $\Var(\xt \mid X) = \Var(X - \xo \mid X) = \Var(\xo \mid X)$, while $\E(\xt \mid X) = (1-\eps) X$. This allows us to do a similar derivation and arrive at: 
\begin{align*}
\Var(\xt) &= E\left[\Var\left(\xt \mid X\right)\right] + \Var\left( \E\left[ \xt \mid X\right]\right)	 \\
&=  \frac{\eps(1-\eps)}{\tilde{\alpha}+1} \left( \Var(X) + \E[X]^2 \right) + (1-\epsilon)^2 \Var\left( X \right).
\end{align*}
Finally,
\begin{align*}
2 \Cov(\xo, \xt) &= \Var(X) - \Var(\xo) - \Var(\xt) \\
&=\Var(X) - 2  \frac{\eps(1-\eps)}{\tilde{\alpha}+1} \left( \Var(X) + \E[X]^2 \right) - \left( \epsilon^2 + 1 - 2 \eps + \epsilon^2\right) \Var(X) \\
&= - 2  \frac{\eps(1-\eps)}{\tilde{\alpha}+1} \left( \Var(X) + \E[X]^2 \right) + 2 \epsilon(1 - \eps) \Var(X) \\
&= -  2 \frac{\eps(1-\eps)}{\tilde{\alpha}+1} \left( \frac{\alpha (1+\alpha)}{\beta^2} \right) + 2 \epsilon(1 - \eps) \frac{\alpha}{\beta^2} \\
&= 2 \eps (1-\eps) \frac{\alpha}{\beta^2} \left( 1 - \frac{\alpha+1}{\tilde{\alpha}+1}\right).
\end{align*}

\section{Proof of Theorem~\ref{theorem:datathin_manyfolds}}
\label{appendix:mainproof_multi}

The proof is nearly identical to that of Theorem~\ref{theorem:datathin}. It extends ideas from 
 \cite{jorgensen1998stationary} and \cite{joe1996time} to the setting of multiple folds.
 
 Let $x$, our observed data, be a realization of random variable $X \sim F_\lambda$. Let $\eps_1,\ldots, \eps_M$ be chosen such that $\sum_{m=1}^M \eps_m = 1$, $\eps_m > 0$, and $\eps_m \lambda$ is in the parameter space $\Lambda$ for $m = 1,\ldots,M$. 
 
 Suppose we draw $\left(\xo, \ldots, \xM\right) \mid X=x \sim G_{\eps_1 \lambda, \eps_2 \lambda, \ldots, \eps_M \lambda, x}$, where $G_{\eps_1 \lambda, \eps_2 \lambda, \ldots, \eps_M \lambda, x}$ was defined in Section~\ref{sec:multiplefolds}.

Separately, let $X_1, X_2, \ldots, X_M$ be mutually independent random variables, where $X_m \sim F_{\eps_m \lambda}$, and let $Y = \sum_{m=1}^M X_m$. As $F_\lambda$ is a convolution-closed distribution, we know that $Y \sim F_\lambda$ and thus $Y$ has the same marginal distribution as $X$. 

The conditional joint distribution of $(\xo, \xt, \ldots, \xM) \mid X$ is the same as the conditional joint distribution of $(X_1, X_2, \ldots, X_M) \mid Y$, by definition of the distribution $G_{\eps_1 \lambda,\ldots, \eps_M \lambda, x}$. Furthermore, we have already seen that the marginal distributions of $X$ and $Y$ are the same. Thus, the marginal joint distribution of $(\xo, \xt, \ldots, \xM)$ is the same as the marginal joint distribution of $(X_1, X_2, \ldots, X_M)$. Furthermore, by construction, the marginal joint distribution of $(X_1, X_2, \ldots, X_M)$ is that of $M$ mutually independent random variables, where $X_m \sim F_{\eps_m \lambda}$. This concludes the proof of Theorem~\ref{theorem:datathin_manyfolds}, parts 1-3. Part 4 of Theorem~\ref{theorem:datathin_manyfolds} follows directly from Definition~\ref{def:linExp}.

\section{Proof of Theorem~\ref{theorem_epsilon}}
\label{appendix:eps_proofs}

To prove Theorem~\ref{theorem_epsilon}, we rely on Lemma~\ref{lemma1} and Lemma~\ref{lemma2}. 

\begin{lemma}
\label{lemma1}
Suppose that we thin a random variable $X \sim F_{\lambda}$ using Algorithm~\ref{alg:datathin_manyfolds} with $\epsilon_1 = \ldots = \epsilon_M = \frac{1}{M}$ to obtain $\xo, \ldots, \xM$. Let $I_X(\theta)$ denote the Fisher information contained in $X$ about an unknown parameter $\theta$ (assume that this Fisher information exists). Then the Fisher information contained in $X^{(m)}$ for $m=1,\ldots,M$ about $\theta$, denoted $I_{X^{(m)}}(\theta)$, is equal to $\frac{1}{M}I_X(\theta)$. 
\end{lemma}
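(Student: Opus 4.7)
The plan is to argue that the tuple $(\xo, \ldots, \xM)$ carries exactly the same Fisher information about $\theta$ as $X$, and then to distribute this information evenly across the $M$ components using their iid structure. By Theorem~\ref{theorem:datathin_manyfolds} with $\epsilon_1 = \cdots = \epsilon_M = 1/M$, the thinned variables are mutually independent with $\xm \sim F_{\lambda/M}$, so in particular they are iid, and they sum to $X$.

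First I would factor the joint density. Writing $x = \sum_{m=1}^M x^{(m)}$, a change of variables from $(\xo, \ldots, \xM)$ to $(\xo, \ldots, X^{(M-1)}, X)$ (with Jacobian one) followed by conditioning on $X$ gives
\begin{equation*}
f_{(\xo, \ldots, \xM)}(x^{(1)}, \ldots, x^{(M)} \mid \theta) = f_X(x \mid \theta) \cdot g(x^{(1)}, \ldots, x^{(M-1)} \mid x),
\end{equation*}
where $g$ is the conditional density corresponding to $G_{\lambda/M, \ldots, \lambda/M, x}$. Under the hypothesis that $\theta$ does not appear in $G$, the factor $g$ is free of $\theta$. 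Differentiating the log of this identity with respect to $\theta$ yields the score identity $s_{(\xo, \ldots, \xM)}(\theta) = s_X(\theta)$, and taking variances gives $I_{(\xo, \ldots, \xM)}(\theta) = I_X(\theta)$.

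Next I would exploit independence to split the left-hand side. Since $\xo, \ldots, \xM$ are mutually independent, Fisher information is additive across coordinates, so
\begin{equation*}
I_{(\xo, \ldots, \xM)}(\theta) = \sum_{m=1}^M I_{\xm}(\theta).
\end{equation*}
By identical distribution, every term on the right equals $I_{\xo}(\theta)$, giving $M \cdot I_{\xo}(\theta) = I_X(\theta)$, and therefore $I_{\xm}(\theta) = I_X(\theta)/M$ for every $m$.

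The main obstacle I anticipate is justifying that $g$ is free of $\theta$. This uses the clarification in Theorem~\ref{theorem_epsilon} that $\theta$ is a parameter ``that does not appear in $G_{\epsilon_1 \lambda, \ldots, \epsilon_M \lambda, x}$''; the same hypothesis must be carried into Lemma~\ref{lemma1}, which I would state explicitly at the start of the proof. A secondary subtlety is that the factorization should be stated for probability mass functions in the discrete case (and standard regularity conditions are needed to exchange differentiation and integration in defining the Fisher information), but the structure of the argument is unchanged.
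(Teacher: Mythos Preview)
Your proposal is correct and shares the same overall architecture as the paper's proof: first establish $I_{(\xo,\ldots,\xM)}(\theta) = I_X(\theta)$, then use the fact that $\xo,\ldots,\xM$ are iid to conclude $I_{\xm}(\theta) = I_X(\theta)/M$.

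The only genuine difference is in how the key equality $I_{(\xo,\ldots,\xM)}(\theta) = I_X(\theta)$ is obtained. The paper sandwiches it between two inequalities: the chain rule for Fisher information together with $I_{(\xo,\ldots,\xM)\mid X}(\theta)=0$ gives $I_{(\xo,\ldots,\xM)}(\theta) \le I_X(\theta)$, while the fact that $X=\sum_m \xm$ is a function of the tuple gives the reverse inequality. You instead write the joint density explicitly as $f_X(x\mid\theta)\cdot g(\cdot\mid x)$, observe that $g$ is free of $\theta$, and read off the score identity directly. Your argument is a touch more elementary and gets equality in one step rather than two; the paper's version is more abstract and avoids manipulating densities. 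Both are valid, and your remark that the hypothesis ``$\theta$ does not appear in $G$'' must be imported from Theorem~\ref{theorem_epsilon} is exactly the point the paper uses when it sets $I_{(\xo,\ldots,\xM)\mid X}(\theta)=0$.
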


\begin{proof}
We began with a random variable $X$ and we constructed $\left( \xo,\ldots,\xm \right)$ without knowledge of $\theta$. We cannot create information from nothing. Thus, if $I_{\left(\xo,\ldots,\xm\right)}(\theta)$ denotes the information about $\theta$ in the joint distribution of $\left(\xo,\ldots,\xm\right)$, then
\begin{equation}
\label{ineq1}
I_{\left(\xo,\ldots,\xm\right)}(\theta) \leq I_X(\theta).
\end{equation}
To prove \eqref{ineq1} formally, we can use the chain rule property of Fisher information to write $I_{\left(\xo,\ldots,\xm, X \right)}(\theta)$ in two ways: 
\begin{equation*}
\label{chainrule}
I_{\left(\xo,\ldots,\xm, X \right)}(\theta)  = I_{\left(\xo,\ldots,\xm \right)}(\theta) + I_{X \mid \left(\xo,\ldots,\xm \right)} (\theta) = I_{\left(\xo,\ldots,\xm\right) \mid X}(\theta) + I_{X} (\theta).  
\end{equation*}
Since $\theta$ is unknown during the thinning process, the distribution $\left(\xo,\ldots,\xm\right) \mid X$ must not involve $\theta$, so $I_{\left(\xo,\ldots,\xm\right) \mid X}(\theta) = 0$. This implies that
\begin{equation*}
\label{chainrule2}
I_{\left(\xo,\ldots,\xm \right)}(\theta) + I_{X \mid \left(\xo,\ldots,\xm \right)} (\theta) = I_{X} (\theta),
\end{equation*}
and since Fisher information is non-negative, \eqref{ineq1} follows.

Separately, since $X = \xo + \ldots + \xm$, we can reconstruct $X$ from $\left(\xo,\ldots,\xm\right)$, and so a fundamental property of Fisher information tells us that
\begin{equation}
\label{ineq2}
I_X(\theta) = I_{\sum_{m=1}^M \xm} (\theta) \leq I_{\left(\xo,\ldots,\xm\right)}(\theta). 
\end{equation}
Combining \eqref{ineq1} and \eqref{ineq2}, we have that
\begin{equation*}
\label{equality}
I_X(\theta) = I_{\left(\xo,\ldots,\xm\right)}(\theta). 
\end{equation*} 
Finally, as $\xo,\ldots,\xM$ are independent and identically distributed,
\begin{equation*}
\label{final}
I_X(\theta) = I_{\left(\xo,\ldots,\xM\right)}(\theta) = \sum_{m=1}^M I_{\xm}(\theta) = M I_{\xo}(\theta). 
\end{equation*} 
It follows immediately that $I_{\xo}(\theta)=\frac{1}{M} I_X(\theta)$. Similarly, $I_{\xm}(\theta)=\frac{1}{M} I_X(\theta)$ for $m=2,\ldots,M$.
\end{proof}

\begin{lemma}
\label{lemma2}
In the setting of Lemma~\ref{lemma1}, the Fisher information contained in $\sum_{m=1}^K X^{(m)}$ for $K < M$, denoted $I_{\sum_{m=1}^K X^{(m)}}(\theta)$, is equal to $\frac{K}{M}I_X(\theta)$. 
\end{lemma}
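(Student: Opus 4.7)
The plan is to reduce the situation to the two-fold setting of Lemma~\ref{lemma1} by regrouping the $M$ folds into two blocks. Let $S = \sum_{m=1}^K X^{(m)}$ and $T = \sum_{m=K+1}^M X^{(m)}$. Since $\xo,\ldots,\xM$ are mutually independent with $X^{(m)}\sim F_{\lambda/M}$ by Theorem~\ref{theorem:datathin_manyfolds}, convolution-closedness gives $S \sim F_{K\lambda/M}$ and $T \sim F_{(M-K)\lambda/M}$, with $S \perp\!\!\!\perp T$ and $S + T = X$. So $(S,T)$ is itself the output of a two-fold thinning of $X$ with $\epsilon = K/M$.

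First I would run the chain-rule argument from the proof of Lemma~\ref{lemma1}, but applied to $(S,T)$ rather than to the $M$ individual folds. The construction of $(S,T)$ from $X$ does not involve $\theta$, so $I_{(S,T)\mid X}(\theta)=0$, giving $I_{(S,T)}(\theta) \leq I_X(\theta)$; conversely, $X=S+T$ is a function of $(S,T)$, giving $I_{(S,T)}(\theta) \geq I_X(\theta)$. Hence $I_{(S,T)}(\theta) = I_X(\theta)$, and by independence of $S$ and $T$,
\begin{equation*}
I_S(\theta) + I_T(\theta) \;=\; I_X(\theta).
\end{equation*}

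Next I would obtain the required matching upper bound on $I_S(\theta)$. Because $S$ is a function of the independent tuple $(\xo,\ldots,X^{(K)})$, Fisher information satisfies
\begin{equation*}
I_S(\theta) \;\leq\; I_{(\xo,\ldots,X^{(K)})}(\theta) \;=\; \sum_{m=1}^K I_{X^{(m)}}(\theta) \;=\; \tfrac{K}{M} I_X(\theta),
\end{equation*}
where the last equality uses Lemma~\ref{lemma1}. The identical argument applied to $T$ gives $I_T(\theta) \leq \tfrac{M-K}{M} I_X(\theta)$. Combining this upper bound on $I_T(\theta)$ with the identity $I_S(\theta)+I_T(\theta)=I_X(\theta)$ yields the matching lower bound $I_S(\theta) \geq \tfrac{K}{M}I_X(\theta)$, and so $I_S(\theta) = \tfrac{K}{M}I_X(\theta)$, as claimed.

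The only nontrivial step is the sandwich: the upper bound alone would follow from the data-processing inequality for Fisher information, but that does not pin down $I_S(\theta)$. The key observation is that the complementary sum $T$ plays a symmetric role, so the two one-sided information inequalities for $S$ and for $T$ must both be saturated. No new computation is needed beyond invoking Lemma~\ref{lemma1} and the chain-rule bookkeeping from its proof.
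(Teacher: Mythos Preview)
Your proof is correct, and it takes a genuinely different route from the paper's argument. The paper establishes the reverse inequality $I_S(\theta)\geq I_{(X^{(1)},\ldots,X^{(K)})}(\theta)$ by a \emph{re-thinning} construction: it builds a random variable $Y$ with the same distribution as $S$, thins $Y$ into $K$ equal folds $(Y^{(1)},\ldots,Y^{(K)})$, observes that this tuple has the same joint law as $(X^{(1)},\ldots,X^{(K)})$, and then invokes the ``cannot create information'' chain-rule argument on $Y$. That shows the $K$-tuple carries no more information than $S$, forcing equality directly between $I_S(\theta)$ and $I_{(X^{(1)},\ldots,X^{(K)})}(\theta)$.

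You instead bring in the complementary block $T=\sum_{m=K+1}^M X^{(m)}$, prove the additive identity $I_S(\theta)+I_T(\theta)=I_X(\theta)$ via the two-fold chain rule, and then sandwich using the data-processing upper bounds on both $I_S$ and $I_T$. This avoids the re-thinning step entirely and is arguably more transparent: the saturation of both one-sided inequalities is forced by the sum constraint rather than by constructing an auxiliary thinning. One small point worth making explicit is that the conditional law of $(S,T)$ given $X$ is $\theta$-free because $(S,T)$ is a deterministic function of $(X^{(1)},\ldots,X^{(M)})$, whose conditional law given $X$ is $G_{\lambda/M,\ldots,\lambda/M,x}$ and by hypothesis does not involve $\theta$; your sentence ``the construction of $(S,T)$ from $X$ does not involve $\theta$'' is correct but could be read as referring to the deterministic map $(X^{(1)},\ldots,X^{(M)})\mapsto(S,T)$ rather than to the full randomized construction from $X$.
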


\begin{proof}
The convolution-closed property of $F_\lambda$ says that $\sum_{m=1}^K X^{(m)} \sim F_{\frac{K}{M} \lambda}$. As $\sum_{m=1}^K X^{(m)}$ is a function of $X^{(1)}, \ldots, X^{(K)}$, we have that
\begin{equation*}
\label{main1}	
I_{\left(X^{(1)}, \ldots, X^{(K)}\right)}(\theta) \geq I_{\sum_{m=1}^K X^{(m)}}(\theta).
\end{equation*}
We can construct a random variable $Y \sim F_{\frac{K}{M} \lambda}$ with the same distribution as $\sum_{m=1}^K X^{(m)}$ by applying Algorithm~\ref{alg:datathin_manyfolds} to $X$ with $\epsilon_1 = K/M$ and $\epsilon_2 = 1-K/M$ and calling the first fold of data $Y$. We then can thin $Y$ with $\epsilon_1 = \ldots = \epsilon_K = \frac{1}{K}$ to obtain $Y^{(1)}, \ldots, Y^{(K)}$, whose joint and marginal distributions are equal to $X^{(1)}, \ldots, X^{(K)}$ (independent random variables that follow $F_{\frac{1}{M} \lambda}$). This allows us to rewrite the inequality above as  
\begin{equation}
\label{forward}
I_{\left(Y^{(1)}, \ldots, Y^{(K)}\right)}(\theta) = I_{\left(X^{(1)}, \ldots, X^{(K)}\right)}(\theta) \geq I_{\sum_{m=1}^K X^{(m)}}(\theta) = I_Y(\theta).
\end{equation}
By the same logic that was given in the proof of Lemma~\ref{lemma1}, we know that we can produce $I_{\left(Y^{(1)}, \ldots, Y^{(K)}\right)}$ from $Y$ without knowing $\theta$. Thus, $Y^{(1)}, \ldots, Y^{(K)}$ cannot contain more information about $\theta$ than $Y$. Thus, we have that 
\begin{equation}
\label{backward}
I_{\left(Y^{(1)}, \ldots, Y^{(K)}\right)}(\theta) = I_{\left(X^{(1)}, \ldots, X^{(K)}\right)}(\theta) \leq I_{\sum_{m=1}^K X^{(m)}}(\theta) = I_Y(\theta),
\end{equation}
where the last equality holds since two random variables with the same distribution contain the same amount of information about $\theta$. Combining the inequalities in \eqref{forward} and \eqref{backward} yields
$$
I_{\sum_{m=1}^K X^{(m)}}(\theta) = I_{\left(X^{(1)}, \ldots, X^{(K)}\right)}(\theta).
$$
Finally, we note that 
$$
I_{\sum_{m=1}^K X^{(m)}}(\theta) = I_{\left(X^{(1)}, \ldots, X^{(K)}\right)}(\theta) =  \sum_{m=1}^K I_{X^{(m)}}(\theta) = \frac{K}{M} I_{X}(\theta),
$$
where the second equality follows from independence, and the third from Lemma~\ref{lemma1}. This completes the proof. 
\end{proof}

We now proceed with the proof of Theorem~\ref{theorem_epsilon}. 
For simplicity, assume that $\epsilon_m$ is a rational number for $m=1,\ldots,M$. Then, we can rewrite $\epsilon_1,\ldots,\epsilon_M$ as $\frac{K_1}{M^*}, \frac{K_2}{M^*}, \ldots, \frac{K_M}{M^*}$ for integers $M^*, K_1, \ldots, K_M$. Further assume that $\frac{1}{M^*} \lambda$ is in the parameter space $\Lambda$ for our model. Then, for $m=1,\ldots,M$, $\xm$ has the same distribution as $\sum_{j=1}^{K_m} \tilde{X}^{(j)}$, where $\tilde{X}^{(1)}, \ldots, \tilde{X}^{(M^*)}$ are random variables that we would obtain if we thinned $X$ into $M^*$ equally sized folds. 
By Lemma~\ref{lemma2}, $I_{X^{(m)}}(\theta) = I_{\sum_{j=1}^{K_m} \tilde{X}^{(j)}}(\theta) = \frac{K_m}{M^*} I_X(\theta) = \epsilon_m I_X(\theta)$.

\begin{remark}
The proof of Theorem~\ref{theorem_epsilon} assumes that (i) $\epsilon_m$ is rational for $m=1,\ldots,M$, and that (ii) $\frac{1}{M^*} \lambda \in \Lambda$ for the common denominator $M^*$. Among the distributions in Table~\ref{tab:maintable}, only the $\mathrm{Binomial}(r,p)$ and the $\mathrm{Multinomial}(r,p)$ have relevant restrictions on the parameter space $\Lambda$. To be able to thin one of these distributions with parameter $\epsilon_m$, we must have that $\epsilon_m r$ is a positive integer. Thus, $\epsilon_m = \frac{Z_m}{r}$ for some integer $Z_m$, i.e. (i) is satisfied. Adopting the notation of the proof of Theorem~\ref{theorem_epsilon},  we note that $K_m=Z_m$ and that $M^*=r=\lambda$. Thus, the requirement that $\frac{\lambda}{M^*}$ is in the parameter space (i.e. that it is a positive integer) follows immediately, since $\frac{\lambda}{M^*}=1$. 
\end{remark}

\section{Simulation Study Supporting Details}
\label{sec:simpar}

In this section, we provide additional details about the simulation studies described in Section \ref{subsec:setup}.

For Example~\ref{subsubsec:binsetup}, in which we select the number of principal components for binomial data, we use the following setup. For $K^*=10$, we compute $\theta = UDV^T$ where $U$ is a $n \times K^*$ random orthogonal matrix, $D$ is a $ K^*\times K^*$ diagonal matrix with diagonal elements equal to $5,6,\dots,14$, and $V$ is a $d \times K^*$ random orthogonal matrix. Then, $p_{ij} = \frac{\exp{\left(\theta_{ij}\right)}}{1+\exp{\left(\theta_{ij}\right)}}$ for $i=1,\dots,n$ and $j=1,\dots,d$.

For Example~\ref{subsubsec:gammasetup}, in which we select the number of clusters in gamma-distributed data, we use the following setup.

In the small $d$, small $K^*$ clustering setting described in Example~\ref{subsubsec:gammasetup}, observations from each cluster are generated as $X_{ij} \overset{\mathrm{ind}}{\sim} \text{Gamma}(\lambda, \theta_{c_i,j})$ where $\lambda=20$,
$$
\theta = \begin{bmatrix} 0.5 & 5 \\ 5 & 0.5 \\ 10 & 10 \\ 0.5 & 0.5 \end{bmatrix},
$$
and $c_i \in \{1,2,3,4\}$ is the true cluster membership for the $i$th observation.

In the large $d$, large $K^*$ clustering setting described in Example~\ref{subsubsec:gammasetup}, observations from each cluster are generated as  $X_{ij} \overset{\mathrm{ind}}{\sim} \text{Gamma}(\lambda, \theta_{c_i,j})$ where $\lambda=2$, the $K^* \times d$ matrix $\theta$ is constructed such that for $j=1,\dots,d$ and $k=1,\dots,K^*$,
$$
\theta_{kj} = 
\begin{cases}
0.1 & \text{if } k \le 9 \text{ and } 10k-9 \le j \le 10k+10, \\
1 & \text{otherwise},
\end{cases}
$$
and $c_i \in \{1,2,\dots,10\}$ is the true cluster membership for the $i$th observation.

\section{Simulation with mean squared error loss function}
\label{sec:simMSE}

\subsection{Methods}

As an alternative to the negative log-likelihood loss used in Section \ref{sec:sim}, here we consider applying a mean squared error loss. 
To do this, we simply replace the negative log likelihood from Step 4 of Algorithms~\ref{alg:binpcaalg} and ~\ref{alg:gammaclustalg} with the mean squared error, defined as
\begin{equation}	
	\frac{1}{nd}\sum_{i=1}^n \sum_{j=1}^d \left(\xte_{ij} - \epsilon^{\mathrm{(test)}}rp_{ij}^{(K)}\right)^2
\end{equation}
in the case of Algorithm~\ref{alg:binpcaalg} and 
\begin{equation}	
\frac{1}{nd}\sum_{i=1}^n \sum_{j=1}^d \left(\xte_{ij} - \frac{\eps^{\mathrm{(test)}}}{\eps^{\mathrm{(train)}}}\hat\mu_{\hat c_i,j}^{\mathrm{(K)}}\right)^2
\end{equation}
in the case of Algorithm~\ref{alg:gammaclustalg}. 

After replacing the loss functions, these algorithms can be applied directly to obtain simulation results for data thinning, and with slight modification to obtain results for multi-fold data thinning and the naive method, as described in Section 4.2.  

\subsection{Results}

In Figure \ref{fig:role_MSE}, we plot the average mean squared error curves, as a function of $K$. As with the negative log-likelihood loss, data thinning approaches produce curves with sharp minimum values at or near $K=K^*$, as opposed to the naive method's monotonically-decreasing curves.

\begin{figure}
\includegraphics[width=\textwidth]{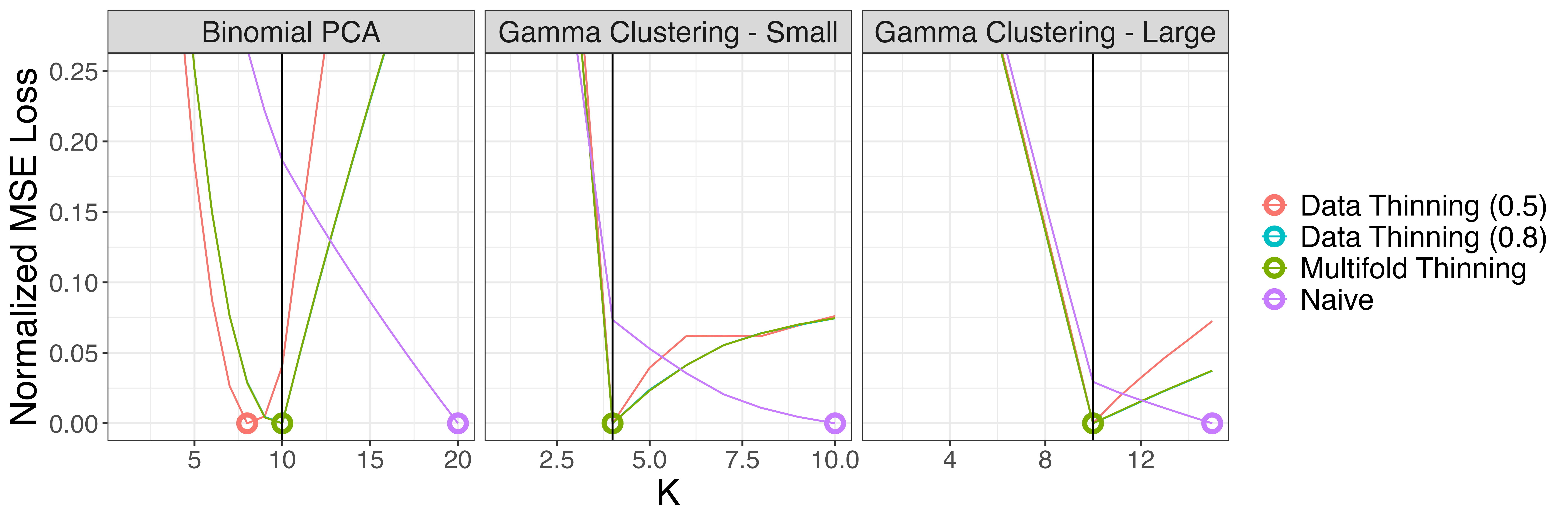}
\caption{The mean squared error loss averaged over 2,000 simulated data sets, as a function of $K$, for the naive method (purple), data thinning with $\epsilon^\mathrm{(train)}=0.5$ (red), data thinning with $\epsilon^\mathrm{(train)}=0.8$ (blue), and multifold thinning with $5$ folds (green). Each curve has been rescaled to take on values between $0$ and $1$, for ease of comparison. The minimum loss values for each method are circled, and $K^*$ is indicated by the vertical black line.}	
\label{fig:role_MSE}
\end{figure}

In Figure \ref{fig:role_eps2}, we plot the proportion of simulations that select the correct value of $K^*$ using the mean squared error loss, as a function of $\epsilon^\mathrm{(train)}$. Results are largely similar to Figure \ref{fig:role_eps}.

\begin{figure}
\includegraphics[width=\textwidth]{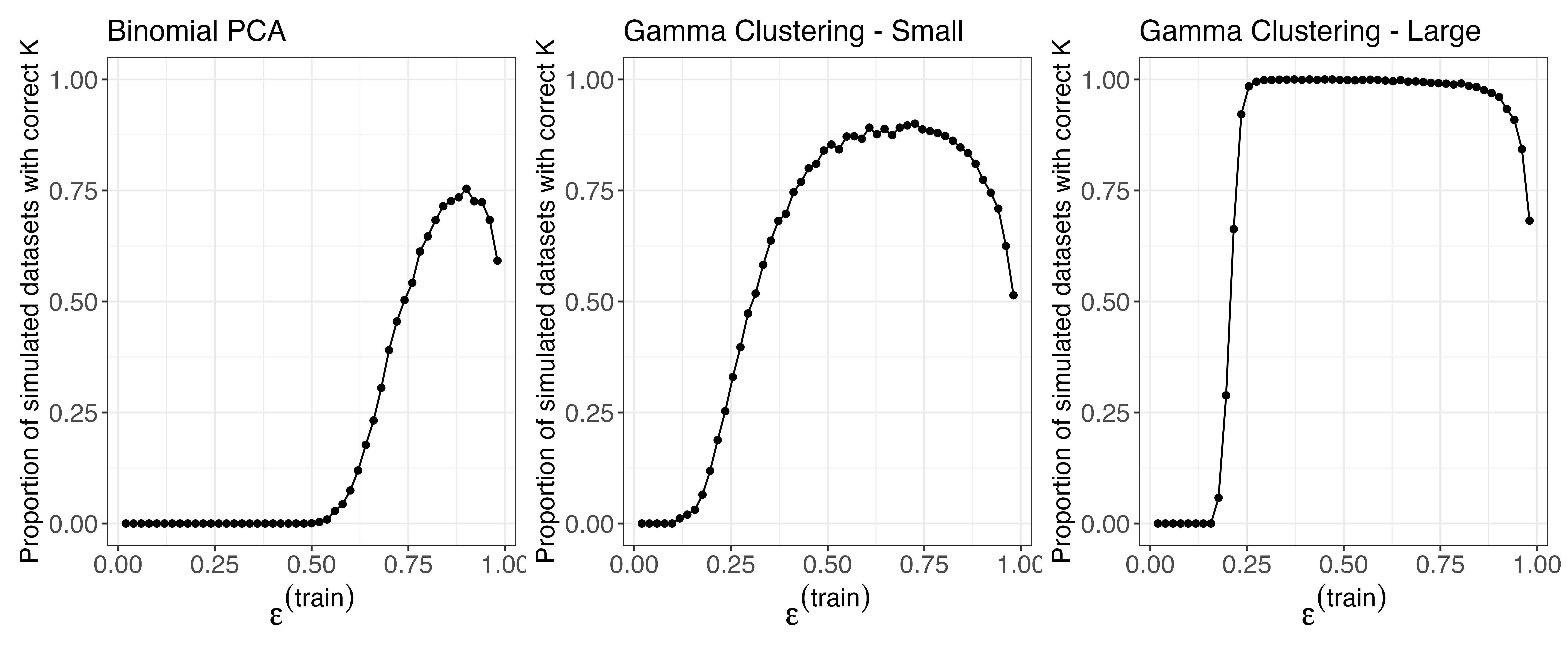}
\caption{The proportion of simulations for which data thinning selects the true value of $K^*$ with the mean squared error loss, as a function of $\epsilon^\mathrm{(train)}$, for the simulation study described in Section \ref{subsec:setup}. The optimal value of $\epsilon^\mathrm{(train)}$ depends on the problem at hand.}	
\label{fig:role_eps2}
\end{figure}

Finally, we compare multi-fold to single-fold thinning, under the mean squared error loss, in Figure \ref{fig:role_folds2}. As in Figure \ref{fig:role_folds}, we find that multi-fold thinning tends to select the correct value of $K$ more often than single-fold thinning.

\begin{figure}
\includegraphics[width=\textwidth]{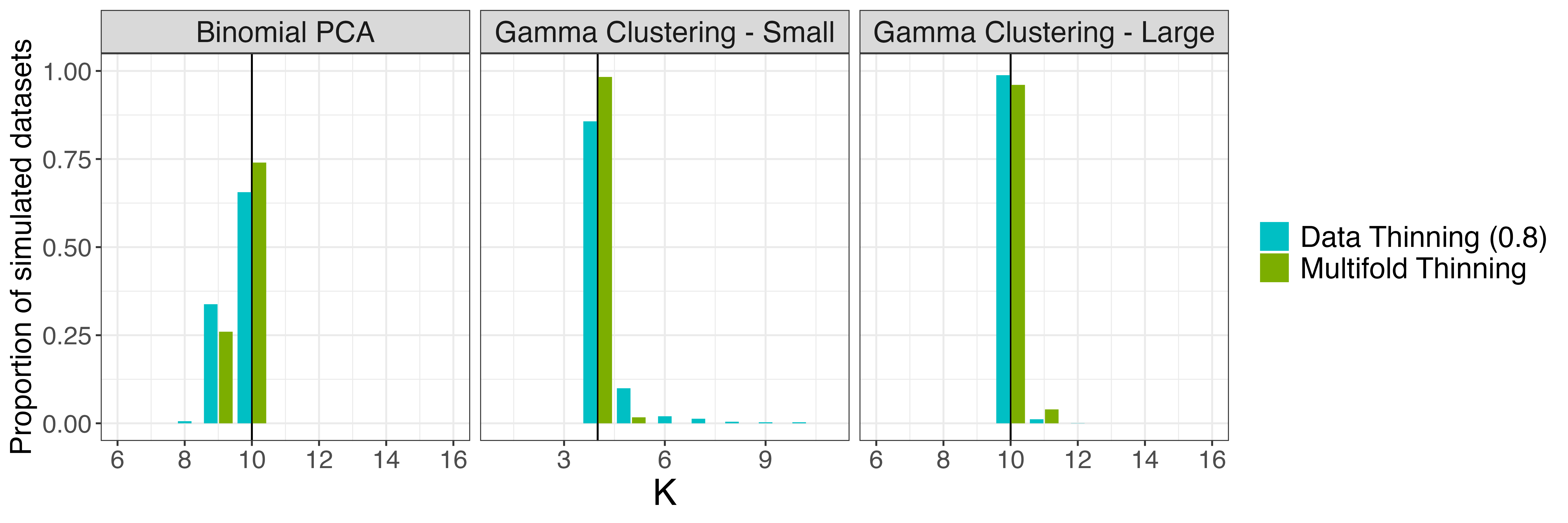}
\caption{The proportion of simulated data sets in which each candidate value of $K$ is selected, with the mean squared error loss, under data thinning with $\epsilon^\mathrm{(train)}=0.8$ (blue) and multifold thinning with $M=5$ (green), for each of the simulation settings described in Section \ref{subsec:setup}. The true value of $K^*$ is indicated by the vertical black line. Multifold thinning tends to select the true value of $K$ more often than single-fold thinning.}	
\label{fig:role_folds2}
\end{figure}

\section{Details for the real data analysis in Section~\ref{sec:data}}
\label{appendix:seurat}

We first explain in detail the preprocessing done to the matrix $X$ in the Seurat tutorial. 
\begin{list}{}{}
\item[(1)] Initial data filtering: We initially filter the data such that only cells with between 200 and 2500 total counts remain (with fewer than 5\% of the counts coming from from mitochondrial genes) and only genes that are expressed in at least 200 cells remain. This reduces the size of $X$ from  $2,700  \times 32,738$ to $2,638 \times 13,714$
\item[(2)] Log normalization: Next, the data are normalized and log transformed, such that
$$
Y_{ij} = \log \left( \frac{X_{ij}}{\sum_{t=1}^{13714} X_{it}} \times 10,000 +1 \right). 
$$
\item[(3)] {Feature selection:} Following this transformation, the top 2000 highly variable genes are selected using the function \texttt{FindVariableFeatures} from the Seurat package. The goal of the function is to find a subset of features with high cell-to-cell variation after accounting for the inherent mean-variance relationship, as these are most likely to be interesting in downstream analysis, and it implements methodology from \cite{seurat2}. 
\item[(4)] {Centering and scaling:} Finally, the columns of the subsetted matrix $Y \in \mathbb{R}^{2638 \times 2000}$ are centered and scaled to obtain the matrix $\tilde{Y}$. 
\end{list}
After these preprocessing steps, the principal components of $\tilde{Y}$ are computed.

We now explain the preprocessing for $\xo$ and $\xt$ that we use for our data thinning alternative to the Seurat tutorial. We follow the same four steps as above, but we are careful to specify what we do on the training set $\xo$ as opposed to the test set $\xt$. 
\begin{list}{}{}
	\item[(1)] Initial data filtering: We perform the initial data filtering from Step (1) above on $X^{(1)}$. We then subset $X^{(2)}$ to include the same genes and cells as those in $X^{(1)}$. After this step, $X^{(1)}$ and $X^{(2)}$ are both in $\mathbb{Z}_{\geq 0}^{2638 \times 13258}$. 
	\item[(2)]  Log normalization:  We normalize and log-transform both $\xo$ and $\xt$, such that:
$$
Y_{ij}^{(1)} = \log \left( \frac{X^{(1)}_{ij}}{\sum_{t=1}^{13258} X^{(1)}_{it}} \times 10,000 +1 \right)
, Y_{ij}^{(2)} = \log \left( \frac{X^{(2)}_{ij}}{\sum_{t=1}^{13258} X^{(2)}_{it}} \times 10,000 +1 \right).
$$
We note that these random variables are still independent and identically distributed under our Poisson assumption. 
\item[(3)] Feature selection:   We then apply the Seurat function \texttt{FindVariableFeatures} to the matrix $Y^{(1)}$ to select the top $2000$ highly variable genes \citep{seurat2}. We subset both $Y^{(1)}$ and $Y^{(2)}$ to contain only these genes, such that $Y^{(1)}, Y^{(2)} \in \mathbb{R}^{2638 \times 2000}$.
\item[(4)] Centering and scaling: We center and scale the columns of the subsetted $Y^{(1)}$ to obtain $\tilde{Y}^{(1)}$. We also center and scale the columns of the subsetted $Y^{(2)}$ to obtain $\tilde{Y}^{(2)}$. 
\end{list}
After these preprocessing steps, the principal components of $\tilde{Y}^{(1)}$ are computed, and the loss function is computed using $\tilde{Y}^{(2)}$.

We now explain the identity that makes Figure~\ref{fig:Seuratplot}(a) and Figure~\ref{fig:Seuratplot}(b) mathematically equivalent. In Section~\ref{sec:data}, we defined
$$
SSE_K(\tilde{Y}) = \left\| \tilde{Y} - U_{1:K} D_{1:K} V_{1:K}^T\right\|_F^2.
$$
We see that:
\begin{align*}
 \left\| \tilde{Y} - U_{1:K} D_{1:K} V_{1:K}^T\right\|_F^2 &= 
 \|\tilde{Y}\|_2^2 - 2 \mathrm{trace}\left(\tilde{Y}^T U_{1:K} D_{1:K} V_{1:K}^T \right) + \mathrm{trace}\left( V_{1:K} D^T_{1:K} U_{1:K}^T U_{1:K} D_{1:K} V_{1:K}^T  \right)  \\
 &= \|\tilde{Y}\|_F^2 - \mathrm{trace}\left(  D_{1:K}^T D_{1:K} \right)  = \left\| \tilde{Y} \right\|_F^2 - \sum_{j=1}^K D_{jj}^2. 
\end{align*}
Thus, if we compute $SSE_K(\tilde{Y})$ for $K=1,\ldots,20$, since $\left\| \tilde{Y} \right\|_F^2$ is fixed, we can easily obtain the values of $\sum_{j=1}^K D_{jj}^2$ for $K=1, \ldots, 20$. By taking the differences between these values for $K$ and $K+1$, we obtain 
$D_{KK}$ for $K=1,\ldots, 20$. Finally, we note that the standard deviation of the $K$th principal component $(U_{K} D_{KK})$ can be written as
$$
\sqrt {( D_{KK} U_{K})^T (D_{KK} U_{K} )} = \sqrt{ D_{KK}^2 }.
$$

Since the standard deviation of the $K$th principal component (plotted in Figure~\ref{fig:Seuratplot}(a)) can be obtained directly from the sums of squared error plotted in Figure~\ref{fig:Seuratplot}(b), we say that the two plots are mathematically equivalent.

\vskip 0.2in
\bibliography{datathin.bib}

\end{document}